\theoremstyle{plain}
\newtheorem{theorem}{Theorem}[section]
\newtheorem*{prop*}{Proposition}
\newtheorem{lemma}[theorem]{Lemma}
\newtheorem*{lem*}{Lemma}
\theoremstyle{remark}
\newtheorem*{remark}{Remark}
\theoremstyle{definition}
\newtheorem{definition}[theorem]{Definition}
\def\th@definition{%
  \thm@notefont{}% same as heading font
  \normalfont % body font
}
\newcommand{\beq}{\begin{equation}}
\newcommand{\eeq}{\end{equation}}
\newtheorem{invariant}{Invariant}
\newtheorem{observation}[theorem]{Observation}
\newcommand{\ignore}[1]{}
\def\eps{\epsilon}
\title{Fully Dynamic MIS in Uniformly Sparse Graphs\footnote{A preliminary version of this paper appeared in the proceedings of ICALP'18.}}
\author{Krzysztof Onak\thanks{IBM Research, TJ Watson Research Center, Yorktown Heights, New York, USA}
\and
Baruch Schieber\thanks{IBM Research, TJ Watson Research Center, Yorktown Heights, New York, USA}
\and
Shay Solomon\thanks{IBM Research, TJ Watson Research Center, Yorktown Heights, New York, USA. Supported by the IBM Herman Goldstine Postdoctoral Fellowship.}
\and
Nicole Wein\thanks{Massachusetts Institute of Technology, Cambridge, Massachusetts, USA. Supported by an NSF Graduate Fellowship.}
}
\begin{document}

\maketitle

\begin{abstract}
We consider the problem of maintaining a
maximal independent set (MIS) in a dynamic graph subject to
edge insertions and deletions.
Recently, Assadi, Onak, Schieber and Solomon (STOC 2018) showed that an MIS can be maintained in
\emph{sublinear} (in the dynamically changing number of edges) amortized update time.
In this paper we significantly improve the update time for
\emph{uniformly sparse graphs}.
Specifically, for graphs with arboricity $\alpha$, the amortized update time of our algorithm is $O(\alpha^2 \cdot \log^2 n)$, where $n$ is the number of vertices.
For low arboricity graphs, which include, for example, minor-free graphs as well as some classes of ``real world'' graphs,
%such as the ones that model the world wide web, a social network or scientific citations,
our update time is polylogarithmic.
Our update time improves the result of Assadi {\em et al.} for all graphs with arboricity bounded by $m^{3/8 - \eps}$, for any constant $\eps > 0$. This covers much of the range of possible values for arboricity, as the arboricity of a general graph cannot exceed $m^{1/2}$.
\end{abstract}

\section{Introduction}

The importance of the maximal independent set (MIS) problem is hard to overstate.
In general, MIS algorithms constitute a useful subroutine for locally breaking symmetry between several choices.
The MIS problem has intimate connections to a plethora of fundamental combinatorial optimization problems such as maximum matching, minimum vertex cover, and graph coloring.
As a prime example, MIS is often used in the context of graph coloring, as all vertices in an independent set can be assigned the same color.
As another important example, Hopcroft and Karp~\cite{HopcroftKarp} gave an algorithm to compute a large matching (approximating the maximum matching to within a factor arbitrarily close to 1)
by applying maximal independent sets of longer and longer augmenting paths.
The seminal papers of Luby~\cite{Luby86} and Linial~\cite{Linial87} discuss additional applications of MIS.
%``Fair Maximal Independent Sets'' by Jeremy T. Fineman, Calvin Newport, Micah Sherr, Tonghe Wang
A non-exhaustive list of further direct and indirect applications of MIS includes resource allocation~\cite{YuWHL14}, leader election~\cite{DaumGKN12}, the construction of network backbones~\cite{KuhnMW04,JurdzinskiK12}, and sublinear-time approximation algorithms~\cite{NguyenO08}.

In the 1980s, questions concerning the computational complexity of the MIS problem spurred a line of research that led to the celebrated parallel algorithms of Luby~\cite{Luby86}, and Alon, Babai, and Itai \cite{AlonBI86}. These algorithms find an MIS in $O(\log n)$ rounds without global coordination. More recently, Fischer and Noever~\cite{FischerN18} gave an even simpler greedy algorithm that considers vertices in random order and takes $O(\log n)$ rounds with high probability (see also an earlier result of Blelloch, Fineman, and Shun~\cite{BlellochFS12}).

In this work we continue the study of the MIS problem by considering the \emph{dynamic setting}, where the underlying graph is not fixed, but rather evolves over time via edge updates.
%via a sequence of edge insertions and deletions, starting from the empty graph on $n$ fixed vertices.
Formally, a \emph{dynamic graph} is a graph sequence ${\cal G} = (G_0,G_1,\dots,G_M)$ on $n$ fixed vertices,
where the initial graph  is $G_0 = (V,\emptyset)$ and each graph $G_i = (V,E_i)$ is obtained from the previous graph $G_{i-1}$ in the sequence by either adding or deleting a single edge.
The basic goal in this context is to maintain an MIS in time significantly faster than it takes to recompute it from scratch following every edge update.
%More specifically, we design a data structure for maintaining an MIS when the initial graph is empty and every update either adds or removes an edge from the graph.

In STOC'18, Assadi, Onak, Schieber, and Solomon \cite{AOSS2018} gave the first fully dynamic algorithm for maintaining an MIS with sub-linear (amortized) update time.
Their update time is $\min\{m^{3/4},\Delta\}$, where $m$ is the (dynamically changing) number of edges and $\Delta$ is a fixed bound on the maximum degree of the graph. Achieving an update time of $O(\Delta)$ is simple, and the main technical contribution of~\cite{AOSS2018} is in further reducing the update time to $O(m^{3/4})$, which improves over the simple $O(\Delta)$ bound for sufficiently sparse graphs. %For graphs of high maximum degree, the update time of the algorithm of \cite{AOSS2018} decreases as the graph becomes sparser.

\subsection{Our contribution}
We focus on graphs that are ``uniformly sparse'' or ``sparse everywhere'', as opposed to the previous work by Assadi et al.\ \cite{AOSS2018} that considers unrestricted sparse graphs. We aim to improve the update time of \cite{AOSS2018} as a function of the ``uniform sparsity'' of the graph.
This fundamental property of graphs has been studied in various contexts and under various names over the years, one of which is \emph{arboricity}~\cite{NashW61,NashW64,Tutte61}:
%The \emph{arboricity} of a graph is a measure of uniform sparsity that was
%introduced by Nash-Williams and Tutte~\cite{NashW61,NashW64,Tutte61}.
\begin{definition} \label{d1}
The arboricity $\alpha$ of a graph $G=(V,E)$ is defined as $\alpha=\max_{U\subset V} \lceil\frac{|E(U)|}{|U|-1}\rceil$, where $E(U)=\left\{(u,v)\in E\mid u,v\in U\right\}$.
\end{definition}
Thus a graph has bounded arboricity if all its induced subgraphs have bounded density.
%The family of bounded arboricity graphs contains forests, planar and bounded genus graphs, bounded treewidth graphs, and more generally graphs excluding a fixed minor.
The family of bounded arboricity graphs contains, bounded-degree graphs, all minor-closed graph classes (e.g., bounded genus graphs and planar graphs in particular, graphs with bounded treewidth), and randomly generated preferential attachment graphs.
Moreover, it is believed that many real-world graphs such as the world wide web graph and social networks also have bounded arboricity~\cite{GoelG06}.

A dynamic graph of \emph{arboricity} $\alpha$ is a dynamic graph such that all graphs $G_i$ have arboricity bounded by $\alpha$.
We prove the following result.
%For graphs with arboricity $\alpha$, the amortized update time of our algorithm is $O(\alpha^2 \cdot \log^2 n)$, where $n$ is the number of vertices.
\begin{theorem} \label{main}
%There is an algorithm to maintain an MIS in a fully
For any dynamic $n$-vertex graph of arboricity $\alpha$, an MIS can be maintained deterministically in $O(\alpha^2\log^2 n)$ amortized update time.
\end{theorem}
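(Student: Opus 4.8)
The plan is to maintain the MIS $S$ as the \emph{greedy} MIS with respect to a dynamically maintained total order $\pi$ on the vertices, together with an acyclic orientation of the edges consistent with $\pi$ (every edge points from its lower-ranked endpoint to its higher-ranked endpoint). Since a graph of arboricity $\alpha$ has degeneracy $O(\alpha)$, one can maintain $\pi$ so that every vertex has only $O(\alpha\log n)$ out-neighbors (neighbors ranked above it), using a hierarchical variant of the known dynamic edge-orientation schemes; $\pi$ will furthermore be organized into $O(\log n)$ \emph{levels}, the comparison between levels being primary and an arbitrary tie-break within a level secondary. The point of pairing these two structures is that in a greedy MIS a vertex $v$ lies in $S$ iff none of its \emph{lower}-ranked neighbors does; hence it suffices to store for each $v$ a counter of how many of its lower-ranked (in-)neighbors lie in $S$, and whenever $v$ enters or leaves $S$ only its $O(\alpha\log n)$ out-neighbors need to be notified and have their counters updated. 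Consequently every individual MIS status change --- including the recursive re-examination of any out-neighbor whose counter just reached, or just left, zero --- costs only $O(\alpha\log n)$, and the whole task reduces to bounding the total number of status changes, the \emph{recourse}, over a sequence of $M$ updates by $O(\alpha\log n)$ per update.

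Concretely, to process the insertion of $(u,v)$ I would first let the orientation subroutine react (it flips $O(\polylog n)$ edges amortized, and each flip is handled like a status change since it alters which neighbors lie below each of its endpoints), and then, if $u,v\in S$ with $\pi(u)<\pi(v)$, remove $v$ from $S$ and propagate strictly upward in $\pi$: an out-neighbor $w\notin S$ whose counter drops to $0$ joins $S$, an out-neighbor $w\in S$ whose counter becomes positive leaves $S$, and so on. Deletions are symmetric. The danger is that with a \emph{fixed} ranking this cascade can be huge --- toggling a single edge near the bottom of a long path flips $\Theta(n)$ vertices on every update --- so the ranking must adapt: when continuing the cascade would merely re-flip vertices that were flipped very recently, the algorithm instead \emph{promotes} the offending vertex to a higher level, pulling it out of the low-level cascade. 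Capping at $O(1)$ the number of status changes a vertex may undergo while at a fixed level before it is promoted, together with the fact that a vertex can be promoted at most $O(\log n)$ times, is what will make the recourse finite and, after the accounting, $O(\alpha\log n)$ amortized per update.

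The technical heart of the proof, and the step I expect to be the main obstacle, is precisely this amortized recourse bound. I would carry it out with a potential-function argument, taking $\Phi$ to be a weighted sum of the vertex levels plus a lower-order term tracking how many non-MIS vertices are one blocker away from joining $S$, tuned so that: (i) a single edge insertion or deletion raises $\Phi$ by only $O(\alpha\log n)$; (ii) every level promotion is fully paid for by credit deposited when the promoted vertex was repeatedly disturbed at its old level; and (iii) every status change that occurs in a cascade not ending in a promotion releases enough potential to cover its own $O(\alpha\log n)$ cost. Arboricity is used in three places that must be made to cooperate inside this single argument: it caps the out-degree, hence the cost of one status change, at $O(\alpha\log n)$; it bounds how many vertices can ever be pushed into the high levels --- only $O(\alpha n/2^i)$ vertices can have degree exceeding $2^i$ --- which keeps the hierarchy $O(\log n)$ deep and keeps the level term of $\Phi$ controlled; and it limits how far within one level a disturbance can spread before being contained. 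Reconciling the opposing pressures ``stabilize $S$ by promoting troublesome vertices'' and ``keep $S$ maximal, which forces \emph{some} neighbor of every removed vertex into $S$'' inside this potential is where the real difficulty lies; maintaining the dynamic orientation and the counter bookkeeping, by contrast, is routine.
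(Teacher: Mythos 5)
There is a genuine gap, and it sits exactly where you place it: the amortized recourse bound for a \emph{deterministic} greedy MIS under a dynamically promoted ordering is not proven in your proposal, and there are concrete reasons to doubt it can be established along the lines you sketch. Because you commit to maintaining the \emph{canonical} greedy MIS of the current order $\pi$, every change to $\pi$ is itself an MIS-changing event: when a vertex $v$ is promoted, neighbors that were above $v$ and were excluded from $S$ only because of $v$ must now enter $S$, which may force $v$ out, which propagates further --- so a promotion is not a way of ``pulling $v$ out of the cascade'' but a new cascade of its own, and the credit argument (``promotions are paid for by earlier disturbances, cascades are paid for by promotions'') becomes circular. The bookkeeping you invoke to keep the hierarchy shallow also does not hold up: the bound that only $O(\alpha n/2^i)$ vertices have degree exceeding $2^i$ constrains degrees, but your levels record how often a vertex has been \emph{disturbed}, not its degree, so nothing limits how many vertices climb to high levels; and ``each vertex is promoted at most $O(\log n)$ times'' is a lifetime bound only if there are no demotions, in which case the scheme saturates and the per-level $O(1)$-changes cap can no longer absorb further disturbances. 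No potential function is actually defined or verified to satisfy your properties (i)--(iii), and since the adversary fully determines the greedy MIS once $\pi$ is fixed, it is not clear any deterministic promotion rule yields $O(\alpha\log n)$ amortized recourse; this is an open-ended research question, not a routine step.

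The paper avoids this difficulty by \emph{not} maintaining a canonical MIS and by not trying to bound recourse through stability of an ordering. It maintains an arbitrary MIS together with a Brodal--Fagerberg $O(\alpha)$-out-degree orientation, and uses the potential ``number of vertices outside $\mathcal{M}$'': the only expensive case (removing a vertex with a large residual in-neighborhood) launches a controlled chain reaction that first builds tentative sets $S^+,S^-$ with $|S^+|\ge 4\alpha|S^-|$, the chain being capped at $O(\log n)$ rounds by a bucketed partition of in-neighborhoods (the Main Invariant plus a doubling argument on the number of processed vertices), and then degeneracy is used to extract from $S^+$ an independent set of size at least $|S^+|/(2\alpha)$ to add to $\mathcal{M}$. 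Thus heavy work is always paid for by a guaranteed net \emph{growth} of the MIS, while the MIS can shrink by at most one per update. Your proposal contains no analogue of this ``work implies MIS growth'' mechanism, which is the actual engine of the $O(\alpha^2\log^2 n)$ bound; without it, or without a genuinely new recourse argument for deterministic greedy MIS, the proof does not go through.
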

Theorem \ref{main} improves the result of Assadi~{\em et al.} for all graphs with arboricity bounded by $m^{3/8 - \eps}$, for any constant $\eps > 0$. This covers much of the range of possible values for arboricity, as the arboricity of a general graph cannot exceed $\sqrt{m}$.
Furthermore, our algorithm has polylogarithmic update time for graphs of polylogarithmic arboricity;
in particular, for the family of constant arboricity graphs the update time is $O(\log^2 n)$.
%For forests we can further improve the update time, but this improvement lies outside the scope of the current paper.
%In particular, for graphs of constant arboricity,

%We remark that while in the general case it was shown~\cite{CHK16,AOSS2018} that the worst case number of MIS changes may be linear in the number of vertices, we were unable to %find such an example for graphs with low arboricity. Thus, the problem of finding the best worst case update time for maintaining a MIS in low arboricity graphs is still open.

\subsection{Our and previous techniques}

\subsubsection{Arboricity and the dynamic edge orientation problem} \label{eop}
 Our algorithm utilizes two properties of arboricity $\alpha$ graphs:
 \begin{enumerate}
 \item Every subgraph contains a vertex of degree at most $2\alpha$.
 \item There exists an \emph{$\alpha$ out-degree orientation}, i.e., an orientation of the edges such that every vertex has out-degree at most $\alpha$.
 \end{enumerate}
 The first property follows from Definition \ref{d1} and the second property is due to a well-known alternate definition by Nash-Williams \cite{NashW64}. %A graph has arboricity $\alpha$ (according to Definition \ref{d1}) iff it can be partitioned into $\alpha$ edge-disjoint forests. By choosing an arbitrary root in each forest and orienting all edges in that forest towards its root, each vertex has out-degree at most $\alpha$.

It is known that a bounded out-degree orientation of bounded arboricity graphs can be maintained efficiently in dynamic graphs.
Brodal and Fagerberg~\cite{BrodalF99} initiated the study of the dynamic edge orientation problem and gave an algorithm that maintains an $O(\alpha)$ out-degree
orientation with an amortized update time of $O(\alpha+\log n)$ in any dynamic $n$-vertex graph of arboricity $\alpha$;
 our algorithm uses the algorithm of \cite{BrodalF99} as a \emph{black box}.
%There are numerous other results for the edge orientation problem
(Refer to \cite{Kowalik07,HeTZ14,KopelowitzKPS14,BerglinB17} for additional results on the dynamic edge orientation problem.)
%showed one can maintain an $O(\alpha\log n)$ out-degree orientation in amortized
%$O(1)$ time. Later, He, Tang, and Zeh~\cite{HeTZ14} showed the following
%trade-off bound: for all $x\geq 1$, there is an algorithm that maintains an
%$O(x\alpha)$ out-degree orientation in amortized
%$O(\frac{\log n}{x})$ time. The worst-case update time of this problem has also been studied by Kopelowitz~{\em et al.}~\cite{KopelowitzKPS14} and
%Berglin and Brodal~\cite{BerglinB17}.

\subsubsection{A comparison to Assadi~{\em et al.}}
As noted already by Assadi~{\em et al.}~\cite{AOSS2018}, a central obstacle in maintaining an MIS in a dynamic graph is to maintain detailed information about the 2-hop neighborhood of each vertex.
%We need this information to update the MIS.
Suppose an edge is added to the graph
and as a result, one of its endpoints $v$ is removed from the MIS.
In this case, to maintain the maximality of the MIS, we need to identify and add to the MIS all neighbors of $v$ that are not adjacent to a vertex in the MIS.
This means that for each of $v$'s neighbors, we need to know whether it has a neighbor in the MIS (other than $v$). Dynamically maintaining \emph{complete} information about the 2-hop neighborhood of each vertex
is prohibitive.

To overcome this hurdle, we build upon the approach of Assadi~{\em et al.}~\cite{AOSS2018} and maintain \emph{incomplete} information on the 2-hop neighborhoods of vertices.
Consequently, we may err by adding vertices to the MIS even though they are adjacent to MIS vertices. %implying that the maintained vertex set is not independent.
%in which case the resulting vertex set is not independent anymore.
To restore independence, we need to remove vertices from the MIS.
The important property that we maintain, following \cite{AOSS2018}, is that the number of vertices added to the MIS following a single update operation will be
significantly higher than the number of removed vertices.
%This property is instrumental in guaranteeing an efficient amortized update time because
Like
%More specifically,
%To analyze the amortized update time,
%Recently, Assadi~{\em et al.}~\cite{AOSS2018} gave the first sub-linear (amortized) update time fully dynamic algorithm for maintaining a MIS.
%Their amortized update time in $\min\{m^{3/4},\Delta\}$, where $m$ is the (dynamically changing) number of edges, and $\Delta$ is the (dynamically changing) maximum degree of the %graph.
 \cite{AOSS2018}, we analyze our algorithm using a potential function defined as the number of vertices not in the MIS.
The underlying principle behind achieving a low amortized update time
is to ensure that if the process of updating the MIS takes a long time, then the size of the MIS increases substantially as a result.
On the other hand, the size of the MIS may decrease by at most one following each update. %As a result, the amortized change in the size of the MIS must be constant per update.
%We follow the same high-level approach.
%To reduce the amortized update time in graphs with bounded arboricity, we
Our algorithm deviates significantly from Assadi~{\em et al.} in several respects, described next.

\subparagraph{I: An underlying bounded out-degree orientation.}
%To improve the amortized update time in graphs with bounded arboricity, we have to
%apply two additional techniques in our algorithm. The first
%Our algorithm heavily relies on the bounded out-degree orientation that can be maintained efficiently in
%dynamic graphs with bounded arboricity.
%Our algorithm heavily relies on the algorithm of
We apply the algorithm of \cite{BrodalF99} for efficiently maintaining a bounded out-degree orientation.
Given such an orientation, we can maintain complete information about the \emph{2-hop out-neighborhoods} of vertices (i.e., the 2-hop neighborhoods restricted to the \emph{outgoing edges} of vertices), which helps significantly in the maintenance of the MIS.
More specifically, the usage of a bounded out-degree orientation enables us to reduce the problem to a single nontrivial case, described in Section~\ref{sec:triv}.
%which occurs when we remove a vertex $v$ from the MIS and need to determine which vertices in $v$'s in-neighborhood have no neighbors in the MIS, and thus need to be added to the MIS. This case is highly nontrivial, and handling it constitutes the crux of our algorithm. %Nonetheless, the ability to restrict our attention to this particular case
%which leaves us with one problematic case,
%which involves the in-neighborhoods of vertices and described in Section 2.
 %to handle involves a vertex removed from the MIS due to an edge insertion, which then requires adding some of its in-neighbors to the MIS.
%This problematic case is the crux of the algorithm, but
%The usage of a bounded out-degree orientation helps significantly in focusing on a
%provides an elegant way for reducing to this particular case,
%which greatly aids in solving the problem. [[S: rewrite]]

\subparagraph{II: An intricate chain reaction.}
To handle the nontrivial case efficiently, we develop an intricate ``chain reaction'' process, initiated by
adding vertices to the MIS that violate the independence property, which then forces the removal of other vertices from the MIS,
which in turn forces the addition of other vertices, and so forth.
Such a process may take a long time. However, by employing a novel partition of a subset of the in-neighborhood of each vertex  into a logarithmic number of ``buckets'' (see point III) together with a careful analysis, we limit this chain reaction to a logarithmic number of steps;
upper bounding the number of steps is crucial for achieving a low update time.
%The update time of our algorithm depends on the number of phases in the chain reaction, hence it is crucial to
%and analyze this process carefully.
We note that such an intricate treatment was not required
in~\cite{AOSS2018}, where the chain reaction was handled by a simple recursive procedure.

\subparagraph{III: A precise bucketing scheme.}
In order to achieve a sufficient increase in the size of the MIS, we need to carefully choose which vertices to add to the MIS so as to guarantee that at every step of the aforementioned {chain reaction}, even steps far in the future, we will be able to add enough vertices to the MIS. We achieve this by maintaining a precise \emph{bucketing} of a carefully chosen subset of the in-neighborhood of each vertex, where vertices in larger-indexed buckets are capable of setting off longer and more fruitful chain reactions. At the beginning of the chain reaction, we add vertices in the larger-indexed buckets to the MIS, and gradually allow for vertices in smaller-indexed buckets to be added.

\subparagraph{IV: A tentative set of MIS vertices.}
In contrast to the algorithm of Assadi~{\em et al.}, here we cannot iteratively choose which vertices to add to the MIS.
Instead, we need to build a \emph{tentative} set of all vertices that may potentially be added to the MIS, and only later
prune this set to obtain the set of vertices that are actually added there. If we do not select the vertices   added
to the MIS in such a careful manner, we may not achieve a sufficient increase in the size of the MIS. % sufficiently in order to efficiently amortize the cost of this process.
To guarantee that the number of vertices actually added to the MIS (after the pruning of the tentative set) is sufficiently large,
%size of the tentative set remains sufficiently large after the pruning,
we make critical use of the first property of bounded arboricity graphs mentioned in Section \ref{eop}.
%More details can be found at the beginning of Section~\ref{sec:updating_M}. % Currently Section 4.1.

%\subparagraph{??.}

\subsubsection{A comparison to other previous work}

Censor-Hillel, Haramaty, and Karnin~\cite{CHK16} consider the problem of maintaining an MIS in the dynamic distributed setting. They show that there is a randomized algorithm that requires only a constant number of rounds in expectation to update the maintained MIS, as long as the sequence of graph updates does not depend on the algorithm's randomness.  %, hence also on the current MIS.
This assumption is often referred to as \emph{oblivious adversary}. %who can generate a sequence of updates based on her knowledge of the algorithm, but without looking at the random choices of the algorithm.
%Interestingly, the MIS maintained by the algorithm of \cite{CHK16} is greedily constructed for a random ordering of vertices.
As noted by Censor-Hillel~{\em et al.}, it is unclear whether their algorithm can be implemented with low total work in the centralized setting. This shortcoming is addressed by Assadi~{\em et al.}~\cite{AOSS2018} and by the current paper.

Kowalik and Kurowski~\cite{KowalikK03} employ a dynamic bounded out-degree
orientation to answer shortest path queries in (unweighted) planar graphs.
Specifically, given a planar graph and a constant $k$, they
maintain a data structure that determines in constant time whether two vertices
are at distance at most $k$, and if so, produces a path of such length between them. This data
structure is fully dynamic with polylogarithmic amortized update time. They show that the case of distance $k$ queries (for a general $k \ge 2$)
 reduces in a clean way to the case of distance 2 queries.
Similarly to our approach, Kowalik and Kurowski maintain information on the 2-hop neighborhoods of
vertices, but the nature of the 2-hop neighborhood information necessary for the two problems is inherently different. For answering distance 2 queries it is required to maintain \emph{complete} information on the 2-hop neighborhoods, whereas for maintaining an MIS one may do with partial information. On the other hand, the 2-hop neighborhood information needed for maintaining an MIS must be more detailed, so as to capture the information \emph{as it pertains to the 1-hop neighborhood}.
%for each 2-hop neighbor that we keep track of,
%we need to know which 1-hop neighbors it is adjacent to. Indeed,
%In particular,
%knowing which of $v$'s neighbors have no neighbors in the MIS
Such detailed information is necessary for determining the vertices to be added to the MIS
following the removal of a vertex from the MIS.
%Even though the MIS problem requires more detailed neighborhood information, it suffices to maintain carefully chosen \emph{partial} information (rather than a complete one),
%and cope with errors as described above.

\subsection{Dynamic MIS vs.\ dynamic maximal matching}
%Baruch 6-16
In the maximal matching problem, the goal is to compute a matching that cannot be extended by adding more edges to it.
The  problem is equivalent to finding an MIS in the line graph of the input graph.
Despite this intimate relationship between the MIS and maximal matching problems, (efficiently) \emph{maintaining} an MIS appears to be inherently harder than maintaining a maximal matching.
%obtaining dynamic algorithms for MIS with update times similar to the best known update times for maximal matching seems a difficult task, .
As potential evidence,  there is a significant gap in the performance of the naive dynamic algorithms for these problems. (The naive algorithms just maintain for each vertex whether it is matched or in the MIS.)
For the maximal matching problem, the naive algorithm
%to find a match for a pair of vertices that become unmatched (due to a deletion of a matched edge
%between them)
has an update time of $O(\Delta)$. This is because the naive algorithm just needs to inspect the neighbors of a vertex $v$, when $v$ becomes unmatched, to determine whether it can to be matched.
 %and the solution never changes by more than $O(1)$ following a single edge update.
 For the MIS problem, when a vertex $v$ is removed from the MIS,
 the naive algorithm needs to inspect not only the neighbors of $v$, but also the neighbors of $v$'s neighbors to determine which vertices need to be added to the MIS; as a result the update time is $O(\min\{m,\Delta^2\})$.  % and the solution may change by $\Omega(\Delta)$ vertices.
Furthermore, the worst-case number of MIS changes (by any algorithm) may be as large as $\Omega(\Delta)$ \cite{CHK16,AOSS2018},
whereas the worst-case number of changes to the maximal matching maintained by the naive algorithm is $O(1)$.
Lastly, the available body of work on the dynamic MIS problem is significantly sparser than that on the dynamic maximal matching problem.

It is therefore plausible that it may be hard to obtain, for the dynamic MIS problem, a bound better than the best bounds known for the dynamic maximal matching problem.
In particular, the state-of-the-art dynamic \emph{deterministic} algorithm for maintaining a maximal matching has an update time of $O(\sqrt{m})$~\cite{NS13}, even in the amortized sense.
Hence, in order to obtain deterministic update time bounds sub-polynomial in $m$, one may have to exploit the structure of the graph, and bounded arboricity graphs are a natural candidate.
A maximal matching can be maintained in graphs of arboricity bounded by $\alpha$ with amortized update time $O(\alpha +\sqrt{\alpha \log n})$ \cite{NS13,HeTZ14};
as long as the arboricity is polylogarithmic in $n$, the amortized update time is polylogarithmic.
 %for exploring this option.
In this work we show that essentially the same picture applies to the seemingly harder problem of dynamic MIS.
%as long as arboricity is polylogarithmic in $n$, the amortized update times can be made polylogarithmic in $n$ as well.

\ignore{
%\paragraph{Comparison of Dynamic MIS to Dynamic Maximal Matching.}
\subparagraph{Dynamic MIS vs.\ Dynamic Maximal Matching.} In the Maximal Matching problem, the goal is to compute a matching that cannot be extended by adding another edge. The problem is equivalent to finding an MIS in the line graph of the input graph. In the dynamic setting, a few papers gave

\hrule
\medskip
\hrule

{\bf Dynamic MIS versus dynamic maximal matching}
\begin{enumerate}
\item Surprisingly, there is almost no work on dynamic MIS. The only prior work is the PODC paper by Censor-Hillel et al., which is in the distributed dynamic setting.
\item In contrast, the closely related problem of dynamic maximal matching has been studied a lot in the last decade, citations (in the bibfile) include:
\begin{enumerate}
\item For general graphs, there is a naive algorithm that runs in $O(n)$ worst-case update time.
There's a paper by \cite{IL93} with amortized bounds that improves on the naive alg in a certain regime, but is worse than the naive alg in the complementary regime.
This result was later improved by \cite{NS13} to get $O(\sqrt{m})$ deterministic worst-case update time, which is never worse than the naive alg,
and improves on it for sparse graphs. This $\sqrt{m}$ is the state-of-the art for deterministic algorithms, and even for randomized algorithms with adaptive adversary.
Since the maximal matching problem seems inherently easier than MIS (see next paragraph), it seems that $\sqrt{m}$ is a natural barrier also for MIS in general (sparse graphs). So it's only natural to try and look at uniformly sparse graphs; this might serve as a selling point...
\item For randomized algorithm that assume oblivious adversary, \cite{BGS11} get logarithmic amortized bound, \cite{Sol16} improves this to constant.
(Highly unclear how to get a randomized alg for MIS, I've been thinking on it quite a bit.)
\item For low arboricity graphs, there are deterministic algorithms with amortized \cite{NS13,HTZ14} and worst-case \cite{KKPS14,HTZ14} bounds;
these bounds are polylogarithmic and people try to get it down to a constant...
(There's also a result for maximum matching in trees in a somewhat different model (with logarithmic query time and logarithmic update time), on arxiv by Gupta and Sharma from 2009 \cite{GS09}, we don't have to mention it.)
\item Maximal matching provides 2-approx vertex cover. There are MANY other results for approximate matchings and vertex cover, for general graphs and low arboricity graphs.
You can take refs from my FOCS'16 paper \cite{Sol16} and my SODA'16 paper with Peleg \cite{PS16}, all of them should be in the bibfile.
Dynamic algorithms for maintaining $(2-\eps)$-approx to vertex cover for planar graphs and low arboricity graphs were first studied in my ITCS'18 paper;
for the planar case the results are much better than for low arboricity graphs.
\end{enumerate}
\end{enumerate}

{\bf Why MIS seems inherently more difficult than maximal matching?}
\begin{enumerate}
\item essentially no results here, while plenty there
\item trivial update time here is $m$ (which is a static computation from scratch), while the trivial (worst-case) update time for maximal matching is $n$ (scanning neighbors of updated vertices).
\item  updating a maximal matching requires only one $O(1)$ changes in the worst-case whereas updating an MIS may require many changes, as much as $\Omega(\Delta)$, where $\Delta$ is max degree (e.g. if we have 2 stars and the center of both are in the MIS and then we add an edge to connect the two centers).
\item In maximal matching if you add any edge to a matching you'll need to delete at most two, or if you delete one edge, you'll need to add at most two;
whereas if you add one vertex to MIS you may have to delete $\Delta$ from it, or if you delete one vertex from it, you may need to add $\Delta$.
\item MAIN REASON: for maximal matching we only need information from the 1-hop nbhd whereas here we need info from the 2-hop nbhd.
 And it's inherently more difficult to go 2 hops away; as a concrete evidence, we can mention (not sure it's needed though) that it's a major open question to get $2-\eps$-maximum matching even using randomization, and going 2 hops away can be used to essentially exclude length-3 augmenting paths (the two edges of the augmenting path are the analog of the 2 hops in the MIS problem), and if we could exclude length-3 paths we would get $3/2$-maximum matching (a real breakthrough).
\item For low arboricity graphs low outdegree orientation almost immediately gives rise to polylogarithmic update time for maximal matching (by notifying out-neighbors), whereas for MIS it's unclear how to get anything even for forests.
\item There's also another more implicit issue -- maximum (rather than maximal) independent set is hard to approximate, whereas maximum matching can be solved exactly in $m \sqrt{n}$ time. Not sure what's the situation in low arboricity graphs, though, so we may prefer not to mention this hardness issue...
\end{enumerate}

We should emphasize that the paper is different than the STOC paper and the Kowalik '03 paper.
This is where the "nuggets" should come into play (will send some shortly), and most of this comparison should be made in the "Proof Highlight" subsection.

{\bf Comparison of this paper to Kowalik distance oracle paper '03:}
\begin{enumerate}
\item Similarities: they also consider bounded out-degree orientations and getting info from the 2-hop neighborhood. They have the same trivial cases and the same hard case as us.
\item The above superficial similarities are where the similarities end.
\item Even though both papers consider the 2-hop nbhd, the nature of the information we need from the 2-hop nbhd in MIS is different. In MIS we need info about the 2-hop nbhd \emph{as it pertains} to the 1-hop nbhd. Specifically, when we remove a vertex from the MIS, we need to know which of its nbrs have no nbrs in the MIS so that we can add them to the MIS. Whereas the Kowalik paper only needs to know the set of 2-hop nbrs. Because of this, in MIS if we're considering a vertex v and some vertex in v's 2-hop nbhd changes its MIS status, it is a difficult case when v and this 2-hop nbr have a large common-nbhd. The core of our algorithm is overcoming the obstacle of dealing with vertices with large common nbhds, which is not an obstacle in the distance oracle problem.
\item Dealing with the large common nbhd obstacle. We don't have time to keep track of complete information about the 2-hop nbhd of a vertex as it pertains to the 1-hop nbhd. We only keep track of partial information. In particular when a vertex is removed from the MIS, for some of its nbrs, we don't know whether they need to be added to the MIS. We cannot simply not add these vertices to the MIS bc then it might not be maximal, so instead we add them all and we may get some "false positives", that is we may add some vertices to the MIS that have neighbors in the MIS. We need to remove these nbrs from the MIS, which results in a chain reaction of changes to the MIS. By precisely controlling the chain we show that despite removing these nbrs from the MIS, with each "false positive" vertex we are constantly making progress towards increasing the size of the MIS.
\item In summary, our problem requires more \emph{detailed} 2-hop communication, while Kowalik's requires more \emph{accurate} 2-hop info (in our argument, it is ok to be unsure about whether some vertices need to be added to the MIS as described in previous point) For the Kowalik paper, the requirement of accurate info about 2-hop nbhd results in building and maintaining a "squared graph", a graph structure that could be very dense,
and we can bypass the maintenance of such a dense graph structure via the "false positive" idea. Consequently, while Kowalik's result applies only to minor-free graphs, our result applies to the wider family of bounded arboricity graphs
\end{enumerate}

Nugget(?) about the main invariant: Maintaining the main invariant is problematic bcs things change dynamically. In particular, vertices may leave an active set, leaving "holes" in buckets, and to maintain the invariant we need to close these holes. <- Maybe this is too detailed to mention in the intro bc it requires saying what the main invariant is.

}

\section{Algorithm overview} \label{overview}

Using a bounded out-degree orientation of the edges, a very simple algorithm suffices to handle edge updates that fall into certain cases. The nontrivial case occurs when we remove a vertex $v$ from the MIS and need to determine which vertices in $v$'s in-neighborhood have no neighbors in the MIS, and thus need to be added to the MIS. The in-neighborhood of $v$ could be very large and it would be costly to spend even constant time per in-neighbor of $v$. Furthermore, it would be costly to maintain a data structure that stores for each vertex in the MIS which of its in-neighbors have no other neighbors in the MIS. Suppose we stored such a data structure for a vertex $v$. Then the removal of a vertex $u$ from the MIS could cause the entirety of the common neighborhood of $u$ and $v$ to change their status in $v$'s data structure. If this common neighborhood is large, then this operation is costly.

To address this issue, our algorithm does not even attempt to determine the exact set of neighbors of $v$ that need to be added to the MIS. Instead, we maintain \emph{partial} information about which vertices will need to be added to the MIS. Then, when we are unsure about whether we need to add a specific vertex to the MIS, we simply add it to the MIS and remove its conflicting neighbors from the MIS, which triggers a chain reaction of changes to the MIS. Despite the fact that this chain reaction may take a long time to resolve, we obtain an amortized time bound by using a potential function: the number of vertices not in the MIS. That is, we ensure that if we spend a lot of time processing an edge update, then the size of the MIS increases substantially as a result.

The core of our algorithm is to carefully choose which vertices to add to the MIS at each step of the chain reaction to ensure that the size of the MIS increases sufficiently. To accomplish this, we store an intricate data structure for each vertex which includes a partition of a subset of its in-neighborhood into numbered buckets. The key idea is to ensure that whenever we remove a vertex from the MIS, it has at least one full bucket of vertices, which we add to the MIS.

When we remove a vertex from the MIS whose top bucket is full of vertices, we begin the chain reaction by adding these vertices to the MIS (and removing the conflicting vertices from the MIS). In each subsequent step of the chain reaction, we process more vertices, and for each processed vertex, we add to the MIS the set of vertices in its \emph{topmost full bucket}. To guarantee that every vertex that we process has at least one full bucket, we utilize an invariant (the ``Main Invariant'') which says that for all $i$, when we process a vertex whose bucket $i$ is full, then in the next iteration of the chain reaction we will only process vertices whose bucket $i-1$ is full. This implies that if the number of iterations in the chain reaction is at most the number of buckets, then we only process vertices with at least one full bucket. To bound the number of iterations of the chain reaction, we prove that the number of processed vertices \emph{doubles} at every iteration. This way, there cannot be more than a logarithmic number of iterations. Thus, by choosing the number of buckets to be logarithmic, we only process vertices with at least one full bucket, which results in the desired increase in the size of the MIS.

\section{Algorithm setup}
Our algorithm uses a dynamic edge orientation algorithm as a black box. %Let $T$ be the amortized update time of this algorithm and let $D$ be the out-degree of the orientation.
For each vertex $v$, let $N(v)$ denote the neighborhood of $v$, let $N^+(v)$ denote the out-neighborhood of $v$, and let $N^-(v)$ denote the in-neighborhood of $v$.

\subsection{The trivial cases}\label{sec:triv}
Let $\cal{M}$ be the MIS that we maintain.
For certain cases of edge updates, there is a simple algorithm to update the $\cal{M}$. Here, we introduce this simple algorithm and then describe the case that this algorithm does not cover.

\begin{definition} We say that a vertex $v$ is \emph{resolved} if either $v$ is in $\cal M$ or a vertex in $N^-(v)$ is in $\cal{M}$. Otherwise we say that $v$ is \emph{unresolved}.
\end{definition}

The data structure is simply that each vertex $v$ stores (i) the set $M^-(v)$ of $v$'s in-neighbors that are in $\cal{M}$, and (ii) a partition of its in-neighborhood into resolved vertices and unresolved vertices.
%\begin{itemize}
%\item the set $M^-(v)$ of $v$'s in-neighbors that are in $\cal{M}$.
%\item a set $M^{-2}(v)$ containing $M^-(u)$ for all $u \in N^-(u)$, partitioned based on whether or not $|M^-(u)|=0$.
%\end{itemize}
To maintain this data structure, whenever a vertex $v$ enters or exits $\cal{M}$, $v$ notifies its 2-hop out-neighborhood. Additionally, following each update to the edge orientation, each affected vertex notifies its 2-hop out-neighborhood.
%The algorithm is as follows.

\begin{mdframed}

\textsc{Delete(u,v)}:
\begin{itemize}
\item It cannot be the case that both $u$ and $v$ are in $\cal M$ since $\cal M$ is an independent set.
\item If neither $u$ nor $v$ is in $\cal M$ then both must already have neighbors in $\cal M$ and we do nothing.
\item If $u\in \cal{M}$ and $v\not\in \cal{M}$, then we may need to add $v$ to $\cal{M}$. If $v$ is resolved, we do not add $v$ to $\cal{M}$. Otherwise, we scan $N^+(v)$ and if no vertex in $N^+(v)$ is in $\cal{M}$, we add $v$ to $\cal{M}$.
\end{itemize}
\textsc{Insert(u,v)}:
\begin{itemize}
\item If it is not the case that both $u$ and $v$ are in $\cal{M}$, then we do nothing and $\cal M$ remains maximal.
\item If both $u$ and $v$ are in $\cal M$ we remove $v$ from $\cal{M}$. Now, some of $v$'s neighbors may need to be added to $\cal{M}$, specifically, those with no neighbors in $\cal{M}$. For each unresolved vertex $w \in N^+(v)$, we scan $N^+(w)$ and if $N^+(w) \cap \cal M=\emptyset$, then we add $w$ to $\cal{M}$. For each resolved vertex $w \in N^-(v)$, we know not to add $w$ to $\cal{M}$. On the other hand, for each unresolved vertex $w \in N^-(v)$, we do not know whether to add $w$ to $\cal M$ and it could be costly to scan $N^+(w)$ for all such $w$. This simple algorithm does not handle the case where $v$ has many unresolved in-neighbors. %\emph{The rest of the argument is devoted to handling this case}.
\end{itemize}
\end{mdframed}

In summary, the nontrivial case occurs when we delete a vertex $v$ from $\cal M$ and $v$ has many unresolved in-neighbors.

\subsection{Data structure}
As in the trivial cases, each vertex $v$ maintains $M^-(v)$ and a partition of $N^-(v)$ into resolved vertices and unresolved vertices. In addition, we further refine the set of unresolved vertices. One important subset of the unresolved vertices in $N^-(v)$ is the \emph{active set of v}, denoted $A_v$. As motivated in the algorithm overview, $A_v$ is partitioned into $b$ buckets $A_v(1),\dots,A_v(b)$ each of size at most $s$. We will set $b$ and $s$ so that $b=\Theta(\log n)$ and $s=\Theta(\alpha)$.

The purpose of maintaining $A_v$ is to handle the event that $v$ is removed from $\cal{M}$. When $v$ is removed from $\cal{M}$, we use the partition of $A_v$ into buckets to carefully choose which neighbors of $v$ to add to $\cal{M}$ to begin a chain reaction of changes to $\cal{M}$. For the rest of the vertices in $A_v$, we scan through them and update the data structure to reflect the fact that $v\not\in \cal{M}$. This scan of $A_v$ is why it is important that each active set is small (size $O(\alpha\log n)$).

%$A_v$ has several important properties listed in the invariants in the next section.
One important property of active sets is that each vertex is in the active set of at most one vertex. For each vertex $v$, let $a(v)$ denote the vertex whose active set contains $v$. Let $B(v)$ denote the bucket of $A_{a(v)}$ that contains $v$.

For each vertex $v$ the data structure maintains the following partition of $N^-(v)$:
\begin{itemize}
\item $Z_v$ is the set of resolved vertices in $N^-(v)$.
\item $A_v$ (\emph{the active set}) is a subset of the unresolved vertices in $N^-(v)$ partitioned into $b=\Theta(\log n)$ buckets $A_v(1),...,A_v(b)$ each of size at most $s=\Theta(\alpha)$. $A_v$ is empty if $v\not\in\cal{M}$.
\item $P_v$ (\emph{the passive set}) is the set of unresolved vertices in $N^-(v)$ in the active set of some vertex other than $v$. $P_v$ is partitioned into $b$ buckets $P_v(1),...,P_v(b)$ such that each vertex $u\in P_v$ is in the set $P_v(i)$ if and only if $B(u)=i$.
\item $R_v$ (\emph{the residual set}) is the set of unresolved vertices in $N^-(v)$ not in the active set of any vertex.
\end{itemize}

We note that while $Z_v$ depends only on $\cal{M}$ and the orientation of the edges, the other three sets depend on internal choices made by the algorithm. In particular, for each vertex $v$, the algorithm picks at most one vertex $a(v)$ for which $v\in A_{a(v)}$ and this choice uniquely determines for every vertex $u\in N^+(v)$, which set ($A_u$, $P_u$, or $R_u$) $v$ belongs to.

We now outline the purpose of the passive set and the residual set. Suppose a vertex $v$ is removed from $\cal{M}$. We do not need to worry about the vertices in $P_v$ because we know that all of these vertices are in the active set of a vertex in $\cal{M}$ and thus none of them need to be added to $\cal{M}$. On the other hand, we do not know whether the vertices in $R_v$ need to be added to $\cal{M}$. We cannot afford to scan through them all and we cannot risk not adding them since this might cause $\cal{M}$ to not be maximal. Thus, we add them all to $\cal{M}$ and set off a chain reaction of changes to $\cal{M}$. That is, even though our analysis requires that we carefully choose which vertices of $A_v$ to add to $\cal{M}$ during the chain reaction, it suffices to simply add every vertex in $R_v$ to $\cal{M}$ (except for those with edges to other vertices we are adding to $\cal{M}$).

\subsection{Invariants}
We maintain several invariants of the data structure. The invariant most central to the overall argument is the Main Invariant (Invariant~\ref{inv:main}), whose purpose is outlined in the algorithm overview. The first four invariants follow from the definitions.

\begin{invariant}\label{inv:res} \emph{(Resolved Invariant)}. For all resolved vertices $v$, for all $u \in N^+(v)$, $v$ is in $Z_u$. \end{invariant}

\begin{invariant}\label{inv:or} \emph{(Orientation Invariant)}. For all $v$, $Z_v \cup A_v \cup P_v \cup R_v =N^-(v)$. \end{invariant}

\begin{invariant}\label{inv:em} \emph{(Empty Active Set Invariant)}. For all $v \not \in \cal{M}$, $A_v$ is empty. \end{invariant}

\begin{invariant}\label{inv:con} \emph{(Consistency Invariant)}. \vspace{-0.8em}
\begin{itemize}
\item If $v$ is resolved, then for all vertices $u\in N^+(v)$, $v \not \in A_u$.
\item If $v$ is unresolved then $v$ is in $A_u$ for at most one vertex $u\in N^+(v)$.
\item If $v$ is in the active set of some vertex $u$, then for all $w \in N^+(v)\setminus\{u\}$, $v$ is in $P_w(i)$ where $i$ is such that $B(v)=i$.
\item If $v$ is in the residual set for some vertex $u$, then for all $w \in N^+(v)$, $v$ is in $R_w$.
\end{itemize}
\end{invariant}

The next invariant says that the active set of a vertex is filled from lowest bucket to highest bucket and only then is the residual set filled.

\begin{definition}We say that a bucket $A_v(i)$ is \emph{full} if $|A_v(i)|=s$. We say $A_v$ is \emph{full} if all $b$ of its buckets are full. \end{definition}

\begin{invariant}\label{inv:full} \emph{(Full Invariant)}. For all vertices $v$ and all $i<b$, if $A_v(i)$ is not full then $A_v(i+1)$ is empty. Also, if $A_v$ is not full then $R_v$ is empty. \end{invariant}

The next invariant, the Main Invariant, says that if we were to move $v$ from $A_{a(v)}$ to the active set of a different vertex $u$ by placing $v$ in the lowest non-full bucket of $A_u$, then $B(v)$ would not decrease.

\begin{invariant}\label{inv:main} \emph{(Main Invariant)}. For all $v$, if $B(v)=i>1$ then for all $u \in N^+(v) \cap \cal{M}$, $A_u(i-1)$ is full. \end{invariant}

\subsection{Algorithm phases}\label{sec:setup}
The algorithm works in four phases.
\begin{mdframed}
\begin{enumerate}
\item Update $\cal{M}$.
\item Update the data structure.
\item Run a black box dynamic edge orientation algorithm.
\item Update the data structure.
\end{enumerate}
\end{mdframed}

During all phases, for each vertex $v$ we maintain $M^-(v)$. Otherwise, the data structure is completely static during phases 1 and 3.

\section{Algorithm for updating $\cal{M}$}\label{sec:updating_M}

%\subsection{Updating $\cal{M}$}\label{sec:updating_M}
%In this section we describe the procedure for updating $\cal M$ and analyze its runtime.
%In this section we describe the algorithm for updating $\cal M$

When an edge $(u,v)$ is deleted, we run the procedure $\textsc{Delete(u,v)}$ specified in the trivial cases section. When an edge $(u,v)$ is inserted and it is not the case that both $u$ and $v$ are in $\cal{M}$, then we do nothing and $\cal M$ remains maximal. In the case that both $u$ and $v$ are in $\cal{M}$, we need to remove either $u$ or $v$ from $\cal M$ which may trigger many changes to $\cal{M}$. For the rest of this section we consider this case of an edge insertion $(u,v)$.

The procedure of updating $\cal M$ happens in two stages. In the first stage, we iteratively build two sets of vertices, $S^+$ and $S^-$. Intuitively, $S^+$ is a subset of vertices that we intend to add to $\cal M$ and $S^-$ is the set of vertices that we intend to delete from $\cal{M}$. The aforementioned chain reaction of changes to $\cal{M}$ is captured in the construction of $S^+$ and $S^-$. In the second stage we make changes to $\cal{M}$ according to $S^+$ and $S^-$. In particular, the set of vertices that we add to $\cal{M}$ contains a large subset of $S^+$ as well as some additional vertices, and the set of vertices that we remove from $\cal{M}$ is a subset of $S^-$. In accordance with our goal of increasing the size of $\cal M$ substantially, we ensure that $S^+$ is much larger than $S^-$. %The most interesting part of updating $\cal{M}$ is captured in the first stage. We defer the description of the second stage to the full version~\cite{arxiv}.

Why is it important to build $S^+$ before choosing which vertices to add to $\cal{M}$? %The answer lies in
%how we build $S^+$ and
%how we choose which vertices in $S^+$ to add to $\cal{M}$.
%To build $S^+$, we iteratively consider carefully chosen vertices $w$ and add a full bucket of $A_w$ to $S^+$.
The answer is that it is important that we add a \emph{large} subset of $S^+$ to $\cal{M}$ since our goal is to increase the size of $\cal{M}$ substantially. We find this large subset of $S^+$ by finding a large MIS in the graph induced by $S^+$, which exists (and can be found in linear time) because the graph has bounded arboricity.
%Now we are ready to observe the value in completely constructing $S^+$ before adding any vertices to the $\cal{M}$.
Suppose that instead of iteratively building $S^+$, we tried to iteratively add vertices directly to $\cal{M}$ in a greedy fashion. This could result in only very few vertices successfully being added to $\cal{M}$. For example, if we begin by adding the center of a star graph to $\cal{M}$ and subsequently try to add the leaves of the star, we will not succeed in adding any of the leaves to $\cal{M}$. On the other hand, if we first add the vertices of the star to $S^+$ then we can find a large MIS in the star (the leaves) to add it to $\cal{M}$.

\subsection{Stage 1: Constructing $S^+$ and $S^-$}

The crux of the algorithm is the construction of $S^+$ and $S^-$. A key property of the construction is that $S^+$ is considerably larger than $S^-$:

\begin{lemma}\label{lem:bigs+} If $|S^-|>1$ then $|S^+|\geq4 \alpha |S^-|$.\end{lemma}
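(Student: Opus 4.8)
## Proof proposal for Lemma \ref{lem:bigs+}

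The plan is to exhibit inside $S^+$, for every vertex $v\in S^-$, an entire \emph{full} bucket of $v$'s active set, and to show that the buckets obtained from distinct members of $S^-$ are pairwise disjoint. A full bucket has exactly $s=\Theta(\alpha)$ vertices, so this gives $|S^+|\ge s\cdot|S^-|$, and fixing the constant hidden in $s=\Theta(\alpha)$ so that $s\ge 4\alpha$ yields the claimed bound.

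First I would identify what the hypothesis $|S^-|>1$ gives. Recall that $S^-$ is exactly the set of vertices processed by the chain reaction: iteration $0$ processes the single vertex $v_0$ removed from ${\cal M}$ because of the inserted edge, and each iteration adds to $S^+$ a full bucket of every vertex it processes. The key observation is that a nontrivial chain reaction (equivalently, $|S^-|>1$) is possible only when $A_{v_0}$ is full: if $A_{v_0}$ is not full then by Invariant~\ref{inv:full} $R_{v_0}=\emptyset$, so every unresolved in-neighbor of $v_0$ lies either in $P_{v_0}$ --- and is therefore still covered by another vertex of ${\cal M}$ --- or in $A_{v_0}$, and a vertex of $A_{v_0}$ that now violates maximality has no out-neighbor in ${\cal M}$, so re-adding it triggers no further deletion. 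Hence under the hypothesis $A_{v_0}$ is full, $v_0$ already contributes the bucket $A_{v_0}(b)$, of size $s$, to $S^+$, and the chain effectively ``starts at the top bucket''.

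The heart of the argument is to propagate fullness down the chain by induction on the iteration index $k$: every vertex processed in iteration $k$ has a full bucket of index at least $b-k$. For the inductive step, a vertex $v$ processed in iteration $k+1$ is removed from ${\cal M}$ because it conflicts with some vertex $w$ that iteration $k$ added to $S^+$ (and to ${\cal M}$); this $w$ lies in a bucket of index at least $b-k$ of the active set of an iteration-$k$ vertex, so $B(w)\ge b-k$, while $v\in N^+(w)\cap{\cal M}$ since $w$ is unresolved. Provided $b-k\ge 2$ we may apply Invariant~\ref{inv:main} to $w$, which forces $A_v(B(w)-1)$, hence (using Invariant~\ref{inv:full}) also the bucket $A_v(b-k-1)$, to be full. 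To legitimize the side condition $b-k\ge 2$ at every iteration at which the Main Invariant is invoked, I would use the doubling property of the chain reaction --- the number of processed vertices at least doubles from one iteration to the next, so there are only $O(\log n)$ iterations --- together with the choice $b=\Theta(\log n)$ with a sufficiently large constant. The residual sets $R_v$ only add further vertices to $S^+$ and may be ignored here.

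Finally, for disjointness: each bucket charged above is a bucket of the active set $A_v$ of a distinct $v\in S^-$, and by Invariant~\ref{inv:con} every vertex lies in the active set of at most one vertex, so these buckets are pairwise disjoint (and, their members being unresolved, also disjoint from ${\cal M}\supseteq S^-$). Summing over $v\in S^-$ gives $|S^+|\ge s\,|S^-|\ge 4\alpha\,|S^-|$. I expect the main obstacle to be precisely this bucket-index bookkeeping: isolating the initial vertex $v_0$ so that the chain provably starts at bucket $b$, and bounding the number of iterations so that the Main Invariant remains applicable throughout; once those are in place, the disjointness and the arithmetic are routine.
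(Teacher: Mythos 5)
There is a genuine gap, and it stems from a misreading of what $S^-$ is and of where the difficulty in the paper actually lies. In the paper, Lemma~\ref{lem:bigs+} is essentially true \emph{by construction}: after each call to \textsc{Process} terminates, the algorithm explicitly checks whether $|S^+|\geq 4\alpha|S^-|$ and stops as soon as this holds, and the only other way the construction can end is immediately, with $S^-=\{v\}$, when $A_v$ is not full. So if $|S^-|>1$ the algorithm can only have stopped via the termination test, and the inequality is immediate. Your proposal instead tries to prove the inequality by charging a full bucket of $A_v$ to \emph{every} $v\in S^-$, but that charging is invalid for the algorithm as specified: vertices enter $S^-$ merely by being out-neighbors in $\cal M$ of vertices added to $S^+$ (step 2 of \textsc{Process}), and many of them are never processed --- no bucket of theirs is ever added to $S^+$. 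Indeed the paper's Lemma~\ref{lem:pu} shows that at the end of every epoch the \emph{unprocessed} vertices of $S^-$ outnumber the processed ones, so a per-$S^-$-vertex charge of $s$ distinct vertices of $S^+$ is simply false; the correct count (the paper's Lemma~\ref{lem:sp}) is $|S^+_j|\geq p_j\,s$ with $p_j$ the number of \emph{processed} vertices, which is why the termination test, and not a direct count, is what delivers the $4\alpha|S^-|$ bound.

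The substantive ingredients you develop --- fullness propagating down one bucket index per epoch via the Main Invariant, the doubling of processed vertices, and the resulting $O(\log n)$ bound on the number of epochs --- are real, but in the paper they constitute the proof of Lemma~\ref{lem:full} (well-definedness: every processed vertex has a full bucket and every enqueued batch is nonempty), not of Lemma~\ref{lem:bigs+}. Note also that the doubling you invoke is not free in your framing: in the paper it is derived from the failure of the termination condition ($|S^+_j|<4\alpha|S^-_j|$, giving $p_j<u_j$ with $s=8\alpha$), i.e.\ from the very inequality under discussion, so you cannot assume it while trying to prove that inequality by direct counting. One could imagine repairing your argument by forcing the algorithm to process all of $S^-$, but that changes the algorithm, loses the epoch bound that makes the Main Invariant applicable, and would break the running-time analysis; the paper's design deliberately stops early and lets the termination test do the work.
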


After constructing $S^+$ and $S^-$ we will add at least $\frac{|S^+|}{2 \alpha}$ vertices to $\cal M$ and remove at most $|S^-|$ vertices from $\cal{M}$. Thus, Lemma~\ref{lem:bigs+} implies that $\cal{M}$ increases by $\Omega(\frac{|S^+|}{\alpha})$.

To construct $S^+$ and $S^-$, we define a recursive procedure \textsc{Process}($w$) which adds at least one full bucket of $A_w$ to $S^+$. A key idea in the analysis is to guarantee that for every call to \textsc{Process}($w$), $A_w$ indeed has at least one full bucket. %This way, we can argue that $S^+$ becomes large enough. Specifically, we will prove the following lemma about the relative sizes of $S^+$ and $S^-$.

%todo: remove reference to full version!

\subsubsection{Algorithm description}

We say that a vertex $w \in S^-$ has been \emph{processed} if \textsc{Process}($w$) has been called and otherwise we say that $w$ is \emph{unprocessed}. We maintain a partition of $S^-$ into the processed set and the unprocessed set and we maintain a partition of the set of unprocessed vertices $w$ into two sets based on whether $A_w$ is full or not. We also maintain a queue $\cal{Q}$ of vertices to process, which is initially empty. Recall that $(u,v)$ is the inserted edge and both $u$ and $v$ are in $\cal{M}$. The algorithm is as follows.

\begin{mdframed}

First, we add $v$ to $S^-$. Then, if $A_v$ is not full, we terminate the construction of $S^+$ and $S^-$. Otherwise, we call \textsc{Process}($v$).

\vspace{.5em}
\noindent\textsc{Process}($w$): \vspace{-0.8em}
\begin{enumerate}
\item If $A_w$ is full, then add all vertices in $A_w(b)\cup R_w$ to $S^+$.
If $A_w$ is not full, then let $i$ be the largest full bucket of $A_w$ and add all vertices in $A_w(i)$ and $A_w(i+1)$ to $S^+$. We will claim that such an $i$ exists (Lemma~\ref{lem:full}).
\item For all vertices $x$ added to $S^+$ in this call to $\textsc{Process}$, we add $N^+(x) \cap \cal M$ to $S^-$.
\item If $S^-$ contains an unprocessed vertex $x$ with full $A_x$, we call \textsc{Process}($x$).
\end{enumerate}

When a call to $\textsc{Process}$ terminates, including the recursive calls, we check whether Lemma~\ref{lem:bigs+} is satisfied (that is, whether $|S^+|\geq4 \alpha |S^-|$), and if so, we terminate. Otherwise, if $\cal{Q}$ is not empty, we let $w$ be the next vertex in $\cal{Q}$ and call \textsc{Process}($w$). If $\cal{Q}$ is empty we enqueue a new batch of vertices to $\cal{Q}$. This batch consists of the set of all unprocessed vertices in $S^-$. We will claim that such vertices exist (Lemma~\ref{lem:full}).
\end{mdframed}

%\paragraph*{Proof of Lemma~\ref{lem:bigs+}}

\begin{remark} The reason we terminate without calling \textsc{Process}($v$) if $A_v$ is not full (i.e. $R_v$ is empty) is because $R_v$ is the only set for which we cannot afford to determine whether or not each vertex has another neighbor in $\cal{M}$ (besides $v$): We know that each vertex $w\in Z_v\cup P_v$ has another neighbor in $\cal{M}$, and the set $A_v$ is small enough to scan. For the same reason, step 3 of \textsc{Process} is necessary because it ensures that for every vertex $w$ in $S^-$, all vertices in $R_w$ are in $S^+$. If this weren't the case and we removed a vertex $w$ in $S^-$ from $\cal{M}$, we might be left in the ``hard case'' of needing to deal with $R_w$.
\end{remark}

Lemma~\ref{lem:bigs+} follows from the algorithm specification: either the algorithm terminates immediately with $S^-=\{v\}$ or the algorithm terminates according to the termination condition, which is that Lemma~\ref{lem:bigs+} is satisfied.

%We need to show, however, that the algorithm specification is valid.
Several steps in the algorithm (Step 2 of \textsc{Process}($w$) and the last sentence of the algorithm specification) rely on Lemma~\ref{lem:full}:

%We will show that either the algorithm terminates immediately with $S^-=\{v\}$ or the algorithm terminates according to the termination condition, which is that Lemma~\ref{lem:bigs+} is satisfied. That is, we will prove the following lemma, which implies that the specification of the algorithm is valid and the algorithm cannot terminate by emptying $\cal{Q}$.

\begin{lemma}\label{lem:full}\leavevmode\vspace{-0.8em}
\begin{enumerate}
\item If we call \textsc{Process}($w$), then $A_w$ has at least one full bucket.
\item Every batch of vertices that we enqueue to $\cal{Q}$ is nonempty.
\end{enumerate}
\end{lemma}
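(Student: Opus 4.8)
I would prove both parts of Lemma~\ref{lem:full} together by a single induction on the progress of the algorithm, exploiting the Main Invariant (Invariant~\ref{inv:main}) to control which buckets are full as the chain reaction propagates. The key structural claim to isolate and prove first is what the algorithm overview calls the ``Main Invariant consequence'': if every vertex processed in the current iteration of the chain reaction had a full bucket~$i$, then every vertex newly added to $S^-$ in that iteration has a full bucket~$i-1$. I would set up a notion of \emph{level} on $S^-$: a vertex enters $S^-$ at level $\ell$ if it is added in the $\ell$-th iteration of the chain reaction (the initial $v$ is at level $0$, or is the ``seed''). The invariant to push through the induction is: \emph{every vertex $w$ entering $S^-$ at level $\ell$ has $A_w(b-\ell)$ full} (hence $A_w$ has a full bucket as long as $\ell < b$).

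\textbf{Key steps.} First I would verify the base case: $v$ is added to $S^-$, and we only call \textsc{Process}($v$) when $A_v$ is full, so $A_v(b)$ is full; if $A_v$ is not full the construction terminates immediately and there is nothing to prove. For the inductive step, suppose $w$ is processed and, by induction, $A_w$ has a full bucket; let $i$ be either $b$ (if $A_w$ is full) or the largest full bucket otherwise. In Step~1 we add $A_w(i) \cup A_w(i+1)$ (or $A_w(b) \cup R_w$) to $S^+$. Now take any vertex $x$ added to $S^+$ here and any $y \in N^+(x) \cap \cal{M}$ that newly enters $S^-$ in Step~2. I need to show $A_y$ has a full bucket of index at least $i-1$. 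Here is where the Main Invariant enters: $x \in A_w$ means $a(x) = w$ and $B(x) = i$ (if $x \in A_w(i)$) or $B(x) = i+1$. Since $y \in N^+(x) \cap \cal{M}$ and $y \ne w$ (because $w$ was just removed — actually one must be careful: $w$ is still in $\cal{M}$ during Stage 1, but $y \ne w$ since $y$ must be an \emph{out-}neighbor of $x$ distinct from $a(x)=w$; if $y = w$ it would already be in $S^-$), the Main Invariant with $B(x) = i$ gives that $A_y(i-1)$ is full. This is exactly the drop-by-one that matches the level bookkeeping. I would then observe that the algorithm only ever calls \textsc{Process}($x$) on vertices with $A_x$ full, and when it exhausts such vertices it enqueues a fresh batch — the batch being all currently-unprocessed vertices of $S^-$; combining the level bound with the doubling argument (the number of processed vertices doubles each iteration, bounded by the fact that each full bucket has $s = \Theta(\alpha)$ vertices and Lemma~\ref{lem:bigs+}'s threshold has not yet been reached) shows the chain reaction runs for at most $b = \Theta(\log n)$ levels, so $b - \ell \ge 1$ throughout — giving part~1. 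For part~2, a batch is empty only if every vertex of $S^-$ is processed; but if the algorithm has not yet terminated, Lemma~\ref{lem:bigs+}'s condition $|S^+| \ge 4\alpha|S^-|$ fails, and I would argue that an all-processed $S^-$ forces the condition (each processed vertex contributed a full bucket of $s$ vertices to $S^+$ while adding at most out-degree-many, i.e.\ $\le \alpha$, vertices to $S^-$, and $s = \Theta(\alpha)$ is chosen large enough), a contradiction — so an unprocessed vertex exists and the batch is nonempty.

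\textbf{The main obstacle.} The delicate point is the interaction between the two termination mechanisms — the within-iteration recursion in Step~3 (which processes only \emph{full}-$A_x$ vertices) and the outer queue $\cal{Q}$ (which later processes the leftover vertices whose $A_x$ is \emph{not} full). I need the level/index bookkeeping to be robust to the fact that a vertex can sit in $S^-$ for several outer rounds before being processed, and that when it finally is processed its largest full bucket index must still be at least $b - \ell \ge 1$. Pinning down precisely why the \emph{largest} full bucket (not just \emph{some} full bucket) has high enough index — and that the Full Invariant (Invariant~\ref{inv:full}) guarantees the buckets are full ``from the bottom up'' so that ``$A_y(i-1)$ full'' really implies ``$A_y$ has a full bucket'' — is the crux. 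I would also need to double-check the edge case where $y$ enters $S^-$ via Step~2 but was \emph{already} in $S^-$ at a lower level; there the bound only improves, so it is harmless, but it must be stated. The doubling claim bounding the number of levels by $b$ is the other load-bearing ingredient, and I would either prove it here or cite it as established by the algorithm's design (number of buckets chosen logarithmic precisely so that $b$ exceeds the number of chain-reaction steps).
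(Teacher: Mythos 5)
Your proposal is correct and follows essentially the same route as the paper: your level invariant (``a vertex entering $S^-$ at level $\ell$ has $A_w(b-\ell)$ full'') is the paper's epoch-indexed Lemma~\ref{lem:fullj} proved by the same Main-Invariant drop-by-one induction, your counting of $s$ vertices per processed vertex against the $4\alpha|S^-|$ threshold is Lemmas~\ref{lem:sp}--\ref{lem:pu}, the doubling bound on the number of rounds is Lemma~\ref{lem:double}, and your part-2 contradiction (an all-processed $S^-$ would trigger termination) is exactly the paper's. The only detail to make explicit when writing it up is the appeal to the Consistency Invariant so that full buckets of distinct processed vertices contribute disjoint sets of $s$ vertices to $S^+$.
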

\subsubsection{Proof of Lemma~\ref{lem:full}}

Let epoch 1 denote the period of time until the first batch of vertices has been enqueued to $\cal{Q}$. For all $i>1$, let epoch i denote the period of time from the end of epoch $i-1$ to when the $i^{th}$ batch of vertices has been enqueued to $\cal{Q}$.

To prove Lemma~\ref{lem:full}, we prove a collection of lemmas that together show that (i) Lemma~\ref{lem:full} holds for all calls to \textsc{Process} before epoch $b$ ends (recall that $b$ is the number of buckets) and (ii) the algorithm terminates before the end of epoch $b$.

For all $i$, let $p_i$ and $u_i$ be the number of processed and unprocessed vertices in $S^-$ respectively, when epoch $i$ ends. Let $S^+_i$ and $S^-_i$ be the sets $S^+$ and $S^-$ respectively when epoch $i$ ends. Recall that $s$ is the size of a full bucket. Let $s = 8 \alpha$ and let $b = \log_2 n + 1$.

\begin{lemma}\label{lem:fullj} For all $1\leq j\leq b$, every time we call \textsc{Process}($w$) during epoch $j$, $A_w(b-j+1)$ is full. \end{lemma}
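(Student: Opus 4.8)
\textbf{Proof proposal for Lemma~\ref{lem:fullj}.}

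The plan is to prove the statement by induction on the epoch index $j$, using the Main Invariant (Invariant~\ref{inv:main}) as the engine that transmits ``fullness of bucket $i$'' one level down to ``fullness of bucket $i-1$'' as the chain reaction advances by one epoch. For the base case $j=1$: epoch~1 consists of the initial call \textsc{Process}($v$) (which happens only when $A_v$ is full, hence in particular $A_v(b)=A_v(b-1+1)$ is full) together with all the recursive calls it spawns in Step~3. Each recursive call in Step~3 is to an unprocessed vertex $x\in S^-$ whose $A_x$ is full, so again $A_x(b)$ is full; thus every \textsc{Process}($w$) call in epoch~1 has $A_w(b)$ full, which is exactly the $j=1$ case.

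For the inductive step, suppose the claim holds through epoch $j-1$, so every vertex processed during epoch $j-1$ had its bucket $A_w(b-(j-1)+1) = A_w(b-j+2)$ full. The vertices processed during epoch~$j$ are exactly the vertices of the $j$-th batch enqueued to $\mathcal Q$ (plus their Step-3 recursive descendants). By the algorithm description, the $j$-th batch consists of all vertices that are still unprocessed in $S^-$ at the end of epoch $j-1$; these are precisely the vertices that were added to $S^-$ in Step~2 during epoch $j-1$ but not processed (the ones with non-full active sets fall here, and those were not processed because Step~3 only processes full ones and the batch mechanism re-enqueues the rest). Each such vertex $w$ was added to $S^-$ because $w \in N^+(x)\cap\mathcal M$ for some vertex $x$ that was added to $S^+$ during epoch $j-1$, and $x$ lies in bucket $A_{a(x)}(i)$ or $A_{a(x)}(i+1)$ of some vertex $a(x)$ processed in epoch $j-1$; by the inductive hypothesis the topmost bucket touched for $a(x)$ is $b-j+2$, so $x$ has $B(x) = b-j+2$ or, in the $R$-case, needs a separate argument (see below). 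Applying the Main Invariant to $x$ with the out-neighbor $w\in N^+(x)\cap\mathcal M$: since $B(x) = b-j+2 > 1$ (which holds as long as $j \le b$, giving $b-j+2\ge 2$), the invariant yields that $A_w(b-j+1)$ is full. This establishes the claim for all vertices in the $j$-th batch. Finally, for the Step-3 recursive descendants within epoch~$j$: such a vertex is added to $S^+$ out of bucket $A_w(b-j+1)$ of a batch vertex $w$, so its own $B$-value is $b-j+1$, and the Main Invariant again forces bucket $b-j+1 \ge \ldots$ — more carefully, Step~3 only calls \textsc{Process} on vertices with \emph{full} active sets, so for those $A_x(b)$ is full and a fortiori $A_x(b-j+1)$ is full; hence the bound holds for them too.

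The main obstacle I anticipate is the $R_w$ case in Step~1 of \textsc{Process}: when $A_w$ is full we add $A_w(b)\cup R_w$ to $S^+$, and a vertex $x\in R_w$ is in the \emph{residual} set of $w$, not in any active set, so $B(x)$ is undefined and the Main Invariant does not directly apply. I would handle this by noting that $R_w$ is only nonempty when $A_w$ is full (Full Invariant, Invariant~\ref{inv:full}), which by the inductive hypothesis happens only when the vertex $w$ was a \emph{full} vertex processed in epoch $j-1$ — and a full vertex is always processed via Step~3 (it has a full active set), hence $w$ together with its residual-set contributions are part of the recursion handled in the ``descendants'' paragraph above; one then argues directly that any vertex in $S^-$ arising from $R_w$ must itself have a full active set or be re-enqueued, so the bound $A_w(b-j+1)$ full is vacuously strengthened to $A_w(b)$ full for these. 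The other point needing care is confirming that $j$ ranges only up to $b$, so that $b-j+1\ge 1$ and the bucket index is legitimate, and that $B(x)>1$ whenever the Main Invariant is invoked — both follow from restricting attention to $j\le b$, which is exactly the regime in which Lemma~\ref{lem:full} is being proved.
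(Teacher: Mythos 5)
Your proposal follows the same route as the paper's proof: induction on epochs, with the base case coming from the fact that every epoch-1 call (the initial call on $v$ and the Step-3 recursive calls) is made on a vertex with a completely full active set, and the inductive step transferring fullness one bucket down via the Main Invariant applied to the vertices placed in $S^+$, together with the observation that Step-3 calls inside any epoch trivially satisfy the bound. Two remarks on the details. First, the inductive hypothesis only tells you that $A_{a(x)}(b-j+2)$ is full, so the topmost full bucket may be higher and you get $B(x)\geq b-j+2$, not equality; to conclude that $A_w(b-j+1)$ (and not just $A_w(B(x)-1)$) is full you also need the monotonicity implied by the Full Invariant (a nonempty higher bucket forces all lower buckets to be full), which is routine but worth saying. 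The indexing slip about which batch is processed during epoch $j$ is harmless.

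Second, and this is the one substantive problem: your handling of the residual-set case does not work as written. You are right that for $x\in R_{w'}$ the value $B(x)$ is undefined and the Main Invariant is silent --- and in fact the paper's own proof elides this case, so flagging it was a good catch --- but your patch (that $w'$ is ``processed via Step 3'' so its residual contributions fall under the descendants paragraph, and that any $S^-$ vertex arising from $R_{w'}$ ``must have a full active set or be re-enqueued'') establishes nothing: the ``re-enqueued'' alternative is precisely the problematic case, and how $w'$ came to be processed is irrelevant. The clean argument is local to $x$ and its out-neighbors: by the last bullet of the Consistency Invariant, a vertex $x$ lying in some residual set lies in $R_y$ for \emph{every} $y\in N^+(x)$; hence every $y\in N^+(x)\cap\mathcal{M}$ added to $S^-$ on account of such an $x$ has $R_y\neq\emptyset$, and the Full Invariant then forces all of $A_y$ to be full, so the conclusion holds for $y$ in whichever epoch it is processed. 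With that substitution (and the $\geq$ fix above) your induction goes through and coincides with the paper's argument.
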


\begin{proof} We proceed by induction on $j$.

\noindent {\em Base case.} If $j=1$ then the algorithm only calls \textsc{Process}($w$) on vertices $w$ with full $A_w$ and thus full $A_w(b)$.

\noindent {\em Inductive hypothesis.} Suppose that during epoch $j$, all of the processed vertices have full $A_w(b-j+1)$.

\noindent {\em Inductive step.} We will show that during epoch $j+1$, all of the processed vertices have full $A_w(b-j)$. We first note that during \textsc{Process}($w$), the algorithm only adds the vertices in the topmost full bucket of $A_w$ to $S^+$. Thus, the inductive hypothesis implies that
%at the end of epoch $j$
for all vertices $x \in S^+_j$, $B(x)\geq b-j+1$.

Then, by the Main Invariant, for all $x \in S^+_j$ and all $y \in N^+(x) \cap \cal{M}$, $A_y(b-j)$ is full. By construction, the only vertices in $S^-_j$ other than $v$ are those in $N^+(x) \cap \cal M$ for some $x \in S^+_j$. Thus, %at the end of epoch $j$,
for all vertices $y \in S^-_j$, $A_y(b-j)$ is full. During epoch $j+1$, the set of vertices that we process consists only of vertices $w$ that are either in $S^-_j$ or have full $A_w$. We have shown that all of these vertices $w$ have full $A_w(b-j)$.
\end{proof}

%For the following lemmas, we will assume that the algorithm reaches the end of epoch $b$ and then derive a contradiction.

\begin{lemma}\label{lem:sp} For all $1\leq j\leq b$, $|S^+_j|\geq p_j s$.\end{lemma}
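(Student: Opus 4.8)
The plan is to show that every completed call to \textsc{Process} is ``charged'' a distinct full bucket of $s$ vertices inside $S^+$, so that after the $p_j$ calls that have completed by the end of epoch~$j$, we get $|S^+_j| \ge p_j s$.

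First I would identify what a single call \textsc{Process}($w$) puts into $S^+$. Since $j \le b$, any such call completed by the end of epoch~$j$ was made during some epoch $j' \le j \le b$, so Lemma~\ref{lem:fullj} guarantees that $A_w(b-j'+1)$ is full; in particular $A_w$ has at least one full bucket, so the ``largest full bucket'' $i_w$ appearing in Step~1 is well defined with $1 \le i_w \le b$. Inspecting Step~1: if $i_w = b$ then $A_w(b) \cup R_w$ is added to $S^+$, and if $i_w < b$ then $A_w(i_w)$ and $A_w(i_w+1)$ are added; in both cases the full bucket $A_w(i_w) \subseteq A_w$ is placed into $S^+$, and $|A_w(i_w)| = s$.

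Next I would argue that these buckets are pairwise disjoint across calls, and then sum. Throughout the construction of $S^+$ and $S^-$ (which happens in Phase~1, during which the data structure is static apart from the $M^-(\cdot)$ sets), Invariant~\ref{inv:con} holds with the active sets as they currently stand, so every vertex lies in the active set of at most one vertex; hence the active sets $\{A_x\}_x$ are pairwise disjoint, and so are the full buckets $A_w(i_w)$ ranging over the $p_j$ vertices $w$ processed by the end of epoch~$j$. As $S^+$ only ever grows, each $A_w(i_w)$ is contained in $S^+_j$. Therefore $|S^+_j| \ge \left| \bigcup_w A_w(i_w) \right| = \sum_w |A_w(i_w)| = p_j s$.

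I expect the only genuinely delicate point to be the disjointness step: one must be careful that two different \textsc{Process} calls cannot both claim the same vertex of $S^+$. This is precisely why it is important that we charge each call to a full bucket of its \emph{active} set $A_w$ --- distinct active sets are disjoint by Invariant~\ref{inv:con} --- rather than to something coming from $R_w$, since the residual sets of distinct vertices can overlap. The remaining ingredients (the epoch structure, Lemma~\ref{lem:fullj}, and the fact that $S^+$ is only enlarged) are routine bookkeeping.
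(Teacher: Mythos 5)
Your proof is correct and follows essentially the same route as the paper's: by Lemma~\ref{lem:fullj} every call to \textsc{Process}($w$) up to the end of epoch~$j$ finds a full bucket of $A_w$, a full bucket of size $s$ is placed into $S^+$, and the Consistency Invariant (each vertex lies in at most one active set) makes these contributions disjoint across the $p_j$ processed vertices. Your explicit union-of-disjoint-buckets phrasing even sidesteps the paper's extra remark about residual sets, but the underlying charging argument is the same.
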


\begin{proof} By Lemma~\ref{lem:fullj}, for all calls to \textsc{Process}($w$) until the end of epoch $j$, $A_w$ has at least one full bucket. During each call to \textsc{Process}($w$), the algorithm adds at least one full bucket (of size $s$) of $A_w$ to $S^+$. By the Consistency Invariant, (i) every vertex is in the active set of at most one vertex and (ii) if a vertex $w$ appears in the active set of some vertex, then $w$ is not in the residual set of any vertex. The only vertices added to $S^+$ are those in some active set or some residual set, so every vertex in some active set that is added to $S^+$, is added at most once. Thus, for each processed vertex, there are at least $s$ distinct vertices in $S^+$.
\end{proof}

\begin{lemma}\label{lem:pu} For all $1\leq j\leq b$, $p_j<u_j$. That is, there are more unprocessed vertices than processed vertices.\end{lemma}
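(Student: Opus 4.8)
The plan is to read off $p_j < u_j$ from a single size comparison between $S^+_j$ and $S^-_j$, sandwiching $|S^+_j|$ between a lower bound in terms of $p_j$ and an upper bound in terms of $|S^-_j|$. The lower bound is already in hand: by Lemma~\ref{lem:sp}, at the end of epoch $j$ we have $|S^+_j| \ge p_j s$, and since we have set $s = 8\alpha$ this says $|S^+_j| \ge 8\alpha\, p_j$. This is the ``productivity'' half of the argument — every processed vertex has contributed a full bucket's worth of vertices to $S^+$.

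The upper bound is the step that needs the most care, and is where I expect the only real (minor) obstacle: I would argue that $|S^+_j| < 4\alpha\,|S^-_j|$ holds at the instant epoch $j$ ends. The point is purely one of tracing the algorithm's control flow: epoch $j$ ends, by definition, exactly when the $j$th batch is enqueued to $\mathcal{Q}$, and the specification enqueues a batch only immediately after a call to \textsc{Process} (together with its recursive calls) has terminated \emph{without} the termination condition of Lemma~\ref{lem:bigs+} being met and with $\mathcal{Q}$ empty. Since the termination condition is precisely ``$|S^+|\ge 4\alpha|S^-|$'', failing it means $|S^+| < 4\alpha|S^-|$ at that instant, i.e. $|S^+_j| < 4\alpha\,|S^-_j|$. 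The one place to be careful is the boundary: for $j=1$ the relevant call is the initial \textsc{Process}($v$) (which is only invoked when $A_v$ is full), and in the complementary case where $A_v$ is not full the algorithm halts with $S^-=\{v\}$ and \emph{no} epoch occurs, so the statement is vacuous there; modulo this case distinction the upper bound is immediate.

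Finally I would combine the two bounds. Since the algorithm maintains a partition of $S^-$ into its processed and unprocessed vertices, $|S^-_j| = p_j + u_j$, so
\[ 8\alpha\, p_j \;\le\; |S^+_j| \;<\; 4\alpha\,|S^-_j| \;=\; 4\alpha\,(p_j + u_j), \]
and cancelling $4\alpha$ gives $2 p_j < p_j + u_j$, i.e. $p_j < u_j$. This is exactly the engine behind the ``doubling'' claim from the overview, since during epoch $j+1$ every one of the $u_j$ currently-unprocessed vertices gets processed, forcing $p_{j+1} \ge p_j + u_j > 2 p_j$. Beyond the control-flow bookkeeping for the epoch boundary, the argument is just arithmetic with the constants $s = 8\alpha$ and the $4\alpha$ factor from Lemma~\ref{lem:bigs+}, so I anticipate no further difficulty.
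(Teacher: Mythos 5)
Your proposal is correct and matches the paper's proof: both combine the lower bound $|S^+_j|\geq p_j s$ from Lemma~\ref{lem:sp} with the fact that the termination condition $|S^+_j|\geq 4\alpha|S^-_j|$ must have failed at the end of epoch $j$, then use $|S^-_j|=p_j+u_j$ and $s=8\alpha$ to cancel and conclude $p_j<u_j$. Your extra care about the control flow at the epoch boundary is a harmless elaboration of the same argument.
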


\begin{proof} At the end of epoch $j$, Lemma~\ref{lem:bigs+} is not satisfied because if it were then the algorithm would have terminated. That is, $|S^+_j|<4 \alpha |S^-_j|$. Combining this with Lemma~\ref{lem:sp} and the fact that  $p_j+u_j=|S^-_j|$, we have $p_j s<4 \alpha (p_j+u_j)$. Choosing $s=8 \alpha$ completes the proof.
\end{proof}

\begin{lemma}\label{lem:double} For all $1< j\leq b$, $p_j>2 p_{j-1}$. That is, the number of processed vertices more than doubles during each epoch.\end{lemma}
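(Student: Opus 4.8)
The plan is to express the increase $p_j - p_{j-1}$ in the number of processed vertices during epoch $j$ in terms of the size of the batch of vertices that $\mathcal{Q}$ holds at the start of epoch $j$, and then substitute the inequality $p_{j-1} < u_{j-1}$ furnished by Lemma~\ref{lem:pu}. To set this up I would first make precise the queue dynamics that are implicit in the algorithm specification: vertices enter $\mathcal{Q}$ only in ``batches'', a batch is enqueued exactly when $\mathcal{Q}$ has just become empty, and the batch enqueued at the end of epoch $j-1$ consists, by definition, of all vertices of $S^-$ that are unprocessed at that instant, so it has exactly $u_{j-1}$ elements. Hence during epoch $j$ nothing is ever added to $\mathcal{Q}$: epoch $j$ merely drains the $(j-1)$-th batch, popping its vertices one by one and running \textsc{Process} on each (together with the recursive \textsc{Process} calls each of those triggers), and epoch $j$ ends precisely when $\mathcal{Q}$ has been emptied again and the $j$-th batch is enqueued.

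The key step is then the inequality $p_j \geq p_{j-1} + u_{j-1}$. To prove it, observe that by the time epoch $j$ ends, every one of the $u_{j-1}$ vertices of the $(j-1)$-th batch has been removed from $\mathcal{Q}$; and a batch vertex is removed from $\mathcal{Q}$ only after \textsc{Process} has been called on it, or else it was already processed by an earlier recursive call during epoch $j$. Either way all $u_{j-1}$ batch vertices are processed during epoch $j$, and since a processed vertex never reverts to being unprocessed, the processed count grows by at least $u_{j-1}$ over the course of epoch $j$, i.e.\ $p_j \geq p_{j-1} + u_{j-1}$. Combining this with Lemma~\ref{lem:pu} applied to index $j-1$ (legitimate since $2 \leq j \leq b$ yields $1 \leq j-1 \leq b$), which gives $u_{j-1} > p_{j-1}$, we conclude $p_j \geq p_{j-1} + u_{j-1} > 2p_{j-1}$, as required.

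The delicate point --- the step I would expect to be the actual obstacle --- is the bookkeeping in the second paragraph: one has to argue carefully that ``every batch vertex leaves $\mathcal{Q}$ by the end of epoch $j$'' really does imply ``every batch vertex is processed during epoch $j$'', that is, that the popping discipline never silently discards or re-inserts a batch vertex and that no batch vertex can survive into a later epoch. This rests on the two structural facts that $\mathcal{Q}$ gains new elements only at epoch boundaries and that an epoch cannot end while $\mathcal{Q}$ is nonempty, so I would state these explicitly rather than leave them implicit. One further remark worth including is that the $(j-1)$-th batch is genuinely nonempty ($u_{j-1} \geq 1$); but this is immediate from Lemma~\ref{lem:pu} (as $u_{j-1} > p_{j-1} \geq 0$) and in particular does not circularly rely on the still-unproven Lemma~\ref{lem:full}.
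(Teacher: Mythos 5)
Your proposal is correct and follows essentially the same route as the paper: establish $p_j \geq p_{j-1} + u_{j-1}$ by noting that the batch enqueued at the end of epoch $j-1$ consists of the $u_{j-1}$ unprocessed vertices, all of which are processed by the end of epoch $j$, and then combine with Lemma~\ref{lem:pu} to get $p_j > 2p_{j-1}$. The extra bookkeeping you spell out about the queue dynamics is left implicit in the paper but does not change the argument.
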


\begin{proof} At the end of epoch $j-1$, we add all unprocessed vertices to $\cal{Q}$. As a result of calling $\textsc{Process}$ on each vertex in $ \cal Q$, the number of processed vertices increases by $u_{j-1}$ by the end of epoch $j$. That is, $p_j\geq p_{j-1}+u_{j-1}$. By Lemma~\ref{lem:pu}, $p_{j-1}<u_{j-1}$, so $p_j>2 p_{j-1}$.
\end{proof}
We apply these lemmas to complete the proof of Lemma~\ref{lem:full}:
%\begin{proof}[Proof of Lemma~\ref{lem:full}]$ $
\begin{enumerate}
\item In epoch 1 we process at least one vertex, so $p_1\geq1$. By Lemma~\ref{lem:double}, $p_j>2 p_{j-1}$. Thus, $p_j\geq2^{j-1}$. If $j=b=\log_2 n+1$, then $p_j>n$, a contradiction. Thus, the algorithm never reaches the end of epoch $b$. Then, by Lemma~\ref{lem:fullj}, every time we call \textsc{Process}($w$), $A_w$ has at least one full bucket.
\item Suppose by way of contradiction that we enqueue no vertices to $\cal{Q}$ at the end of some epoch $1\leq j\leq b$. Then, $u_j=0$. By Lemma~\ref{lem:pu}, $p_j<u_j$, so $p_j<0$, a contradiction.
\end{enumerate}
%\end{proof}

\subsection{Stage 2: Updating $\cal M$ given $S^+$ and $S^-$}\label{sec:MS}

\subsubsection{Algorithm description}

%Recall that we are considering an edge insertion $(u,v)$.
A brief outline for updating $\cal{M}$ given $S^+$ and $S^-$ is as follows: We begin by finding a large MIS $\cal{M'}$ in the graph induced by $S^+$ and adding the vertices in $\cal{M'}$ to $\cal{M}$. This may cause $\cal{M}$ to no longer be an independent set, so we remove from $\cal{M}$ the vertices adjacent to those in $\cal{M'}$. By design, all of these removed vertices are in $S^-$. Now, $\cal{M}$ may no longer be maximal, so we add to $\cal{M}$ the appropriate neighbors of the removed set. The details follow.

\begin{lemma}\label{lem:M} Given $S^+$ and $S^-$, there is an algorithm for updating $\cal M$ so that
\begin{enumerate}
\item at least $\frac{|S^+|}{2 \alpha}$ vertices are added to $\cal M$ and at most $|S^-|$ vertices are deleted from $\cal{M}$.
\item the set of vertices added to $\cal{M}$ is disjoint from the set of vertices removed from $\cal{M}$.
\item after updating, $\cal M$ is a valid MIS.
\end{enumerate}
\end{lemma}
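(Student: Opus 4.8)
The plan is to follow the three-step outline sketched before the statement of Lemma~\ref{lem:M}, establishing the three numbered claims along the way. First I would invoke the first property of bounded arboricity graphs from Section~\ref{eop}: every subgraph of an arboricity-$\alpha$ graph contains a vertex of degree at most $2\alpha$. Running the standard greedy ``repeatedly remove a min-degree vertex into the independent set'' procedure on the graph induced by $S^+$ therefore yields an independent set $\mathcal{M}'$ with $|\mathcal{M}'| \geq \frac{|S^+|}{2\alpha+1}$; a slightly more careful count (or using the fact that the arboricity bound gives degeneracy $\le 2\alpha$, so a $(2\alpha)$-coloring exists and the largest color class has size $\ge |S^+|/(2\alpha)$) gives the clean bound $|\mathcal{M}'|\ge \frac{|S^+|}{2\alpha}$ needed for item~1. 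This runs in linear time in the size of the induced subgraph, which is fine for the amortized analysis. Add all of $\mathcal{M}'$ to $\mathcal{M}$.

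Next I would argue about independence and what must be removed. After adding $\mathcal{M}'$, the only possible violations of independence are edges between a vertex of $\mathcal{M}'$ and an old vertex of $\mathcal{M}$ (edges inside $\mathcal{M}'$ don't exist, and edges among old $\mathcal{M}$ vertices don't exist). So define the removed set to be exactly those old $\mathcal{M}$-vertices adjacent to some vertex of $\mathcal{M}'$. The key point is that every such vertex lies in $S^-$: a vertex $x$ added to $S^+$ has, by Step~2 of \textsc{Process}, all of $N^+(x)\cap\mathcal{M}$ placed into $S^-$; and for an in-neighbor $y\in N^-(x)\cap\mathcal{M}$ of $x\in\mathcal{M}'\subseteq S^+$, since $x$ was unresolved (it lived in some active/residual set) it had no in-neighbor in $\mathcal{M}$ at the start — but here I should be careful and instead argue that $x$'s old MIS-neighbors, whether in- or out-neighbors, are precisely captured, using the invariants on $A_w, P_w, R_w$ together with Step~2. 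The cleanest route: all vertices of $S^+$ come from $A_w(i)\cup A_w(i+1)$ or $A_w(b)\cup R_w$ for processed $w\in S^-$, and these are subsets of the \emph{unresolved} in-neighbors of $w\in\mathcal{M}$, so each such $x$ has $w\in N^+(x)$ as its unique old MIS-neighbor, which is in $S^-$. Hence the removed set $\subseteq S^-$, giving item~1's second half, and since $S^+$ and $S^-$ are disjoint by construction (a vertex in an active/residual set of an MIS vertex is unresolved, while $S^-$ consists of MIS vertices), item~2 follows: we add a subset of $S^+$ plus some new vertices that are all non-MIS, and remove a subset of $S^-$, and these are disjoint.

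Finally, for item~3 I would restore maximality. After removing the set $D\subseteq S^-$, the vertices that may now be unresolved are neighbors of $D$ that have no other MIS-neighbor. The Remark and the design of \textsc{Process} (Step~3) guarantee that for each $w\in S^-$, every vertex of $R_w$ was placed into $S^+$, and every vertex of $Z_w\cup P_w$ has another MIS-neighbor and so is still resolved; the vertices of $A_w$ are few (at most $bs = O(\alpha\log n)$) and can be scanned directly, adding to $\mathcal{M}$ those with no MIS-neighbor (checking via their out-neighborhoods, which are of size $\le\alpha$). One also must handle out-neighbors of $D$ and vertices made unresolved by removals, but these are exactly the ``trivial case'' handled by scanning out-neighborhoods. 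This restores maximality, proving item~3, and completes the three claims. The main obstacle I anticipate is item~1's lower bound on $|\mathcal{M}'|$ combined with a fully rigorous proof that $D\subseteq S^-$ — i.e., carefully tracking, via the four invariants and Step~2 of \textsc{Process}, that no MIS-neighbor of any $S^+$-vertex escapes $S^-$, and that the subsequent re-addition of neighbors of $D$ never needs to touch the un-scannable residual sets. Everything else is bookkeeping with the invariants already established.
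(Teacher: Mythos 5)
Your overall plan matches the paper's: find a large MIS $\mathcal{M}'$ in the graph induced by $S^+$ via the greedy low-degree argument, remove the conflicting old MIS vertices (arguing they all lie in $S^-$ because $S^+$-vertices are unresolved, so their only MIS neighbors are out-neighbors, which Step~2 of \textsc{Process} placed into $S^-$), and then rescan $N^+(w)\cup A_w$ for each removed $w$ to restore maximality. However, there is a genuine gap in your independence argument: you assert that after adding $\mathcal{M}'$ ``edges among old $\mathcal{M}$ vertices don't exist,'' and accordingly you define the removed set as exactly the old MIS vertices adjacent to $\mathcal{M}'$. But the whole update is triggered by the insertion of the edge $(u,v)$ with \emph{both} endpoints in $\mathcal{M}$ — that is precisely an edge among old $\mathcal{M}$ vertices. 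Your algorithm only removes neighbors of $\mathcal{M}'$, and nothing forces $v$ (or $u$) to be such a neighbor; indeed in the case where $A_v$ is not full the construction terminates with $S^-=\{v\}$ and $S^+=\emptyset$, so your procedure removes and adds nothing and leaves the edge $(u,v)$ inside $\mathcal{M}$, violating item~3. The paper has an explicit extra step (its Step~3): after removing $N^+(w)\cap\mathcal{M}$ for $w\in\mathcal{M}'$, if $u$ and $v$ are still both in $\mathcal{M}$, remove $v$ (which is harmless for the counting since $v\in S^-$ by construction). Your proposal needs this step, and without it the claimed characterization of independence violations is false.

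Two smaller points. First, your ``cleanest route'' asserting that each $x\in S^+$ has $w\in N^+(x)$ as its \emph{unique} old MIS-neighbor is wrong — an unresolved vertex can have several out-neighbors in $\mathcal{M}$; what saves you is exactly what you wrote first, namely that Step~2 of \textsc{Process} puts \emph{all} of $N^+(x)\cap\mathcal{M}$ into $S^-$, combined with unresolvedness ruling out in-neighbors in $\mathcal{M}$. Second, your maximality sketch claims that vertices of $Z_w\cup P_w$ ``have another MIS-neighbor and so are still resolved,'' but that certifying neighbor (the in-neighbor in $\mathcal{M}$ for a $Z_w$-vertex, or the vertex $y$ with $w'\in A_y$ for a $P_w$-vertex) may itself be removed in the same update; the paper's proof handles this by a case analysis on the \emph{last} removed neighbor $x$ of a hypothetical uncovered vertex and invoking the Step-4 scan of $N^+(x)\cup A_x$ for that $x$ (together with the disjointness of the added and removed sets). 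Your Step-4-style rescan does cover these cases, but the argument as written does not establish it.
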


The first step of the algorithm for updating $\cal M$ is to find an MIS $\cal{M'}$ in the graph induced by $S^+$ and add the vertices in $\cal{M'}$ to $\cal{M}$. The following lemma ensures that we can find a sufficiently large $\cal{M'}$.

\begin{lemma}\label{lem:MIS} If $G'$ is a graph on $n'$ nodes of arboricity $\alpha$, then there is an $O(n' \alpha)$ time algorithm to find an MIS $\cal M'$ in $G'$ of size at least $\frac{n'}{2 \alpha}$.\end{lemma}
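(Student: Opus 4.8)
The plan is to use a standard greedy ``peeling'' argument, exploiting the fact that bounded arboricity graphs are sparse not just globally but in every subgraph. First I would record a slightly sharper form of the first property of Section~\ref{eop}: for any subset $U$ of the vertices of a graph of arboricity $\alpha$, the induced subgraph on $U$ has $\sum_{v\in U}\deg_U(v) = 2|E(U)| \le 2\alpha(|U|-1) < 2\alpha|U|$, so its average degree is strictly below $2\alpha$ and hence some vertex has degree at most $2\alpha-1$. Since arboricity is monotone under vertex deletions, this holds for every subgraph encountered during the peeling process, and it is this strict bound (giving $2\alpha-1$ rather than $2\alpha$) that is responsible for the $\frac{n'}{2\alpha}$ in the statement.

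The algorithm is: starting from $H=G'$, repeatedly pick a vertex $v$ of degree at most $2\alpha-1$ in the current graph $H$ (one exists by the above), add $v$ to $\mathcal{M}'$, and delete $v$ together with all its neighbors from $H$; stop when $H$ is empty. The output is independent because whenever a vertex joins $\mathcal{M}'$ all of its then-present neighbors are deleted immediately, so no two vertices of $\mathcal{M}'$ are adjacent; it is maximal because every vertex of $G'$ is eventually deleted, and a vertex is deleted either because it was selected or because it had a selected neighbor. For the size bound, each iteration deletes at most $1+(2\alpha-1)=2\alpha$ vertices while adding exactly one vertex to $\mathcal{M}'$; since all $n'$ vertices are deleted over the course of the run, there are at least $\frac{n'}{2\alpha}$ iterations, so $|\mathcal{M}'|\ge \frac{n'}{2\alpha}$. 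This procedure is deterministic.

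It remains to implement it in $O(n'\alpha)$ time. I would maintain, for each surviving vertex, its current degree, together with a bucket array $D[0..n'-1]$ of doubly linked lists, where $D[d]$ holds the surviving vertices of current degree $d$ and each vertex keeps a back-pointer to its list node. Building this from the adjacency lists of $G'$ costs $O(n'+|E(G')|)=O(n'\alpha)$, since $|E(G')|\le\alpha(n'-1)$. To select $v$ it suffices to scan $D[0],D[1],\dots,D[2\alpha-1]$ for the first nonempty list, which is guaranteed to exist; this is $O(\alpha)$ per iteration over at most $n'$ iterations, hence $O(n'\alpha)$ in total. When a vertex is deleted we traverse its remaining incident edges, and for each surviving neighbor we decrement its degree and move it one bucket down in $O(1)$ time; summed over the whole run this is $O(|E(G')|)=O(n'\alpha)$. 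Adding the three contributions gives the claimed $O(n'\alpha)$ bound.

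The argument is essentially routine; the only points that need care are (i) using the strict inequality to get minimum degree at most $2\alpha-1$, which is exactly what yields $\frac{n'}{2\alpha}$ and not $\frac{n'}{2\alpha+1}$, and (ii) confirming that locating a low-degree vertex costs only $O(\alpha)$ per step, for which the bucket structure (rather than, say, a priority queue) is the natural choice. The main ``obstacle'', such as it is, is just the bookkeeping needed to charge the total degree-update work to the number of edges of $G'$.
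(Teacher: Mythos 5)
Your proof is correct and follows essentially the same route as the paper: greedily peel off a minimum-degree vertex (which exists with degree below $2\alpha$ by the arboricity bound), add it to $\mathcal{M}'$, delete it and its neighbors, and charge the degree-update work to the $O(n'\alpha)$ edges. The only differences are cosmetic refinements on your part—making explicit the strict bound $2\alpha-1$ that yields the $\frac{n'}{2\alpha}$ count, and using degree buckets where the paper simply partitions vertices into those of degree at most $2\alpha$ and the rest—both yielding the same $O(n'\alpha)$ running time.
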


\begin{proof} Since $G'$ has arboricity $\alpha$, every subset of $G'$ must have a vertex of degree at most $2 \alpha$. The algorithm is simple. Until $G'$ is empty, we repeatedly find a vertex $w$ of degree at most $2 \alpha$, add $w$ to $\cal M'$, and then remove $w$ and $N(w)$ from $G'$. For every vertex we add to $\cal{M}$, we remove at most $2 \alpha$ other vertices from the graph so $|\cal{M'}|\geq$ $ \frac{n'}{2 \alpha}$.

To implement this algorithm, we simply use a data structure that keeps track of the degree of each vertex and maintains a partition of the vertices into two sets: one containing the vertices of degree at most $2\alpha$ and the other containing the rest. Then the removal of each edge takes $O(1)$ time. $G'$ has at most $n' \alpha$ edges so the runtime of the algorithm is $O(n' \alpha)$.
\end{proof}

%Throughout the process of updating $\cal{M}$, we do not update the data structure, however, each vertex $w$ maintains a simple list $M^-(w)$ of its in-neighbors in $\cal{M}$. To maintain this list, whenever a vertex enters or exits $\cal{M}$, it notifies all of its out-neighbors.
The algorithm consists of four steps.
\begin{mdframed}
\begin{enumerate}
\item We find an MIS $\cal M'$ in the graph induced by $S^+$ using the algorithm of Lemma~\ref{lem:MIS}. Then we add to $\cal M$ all vertices in $\cal M'$.
\item For all vertices $w \in \cal M'$, we remove from $\cal M$ all vertices in $N^+(w) \cap \cal{M}$. We note that $N^-(w)\cap \cal{M}=\emptyset$ because otherwise $w$ would be resolved, and $S^+$ contains no resolved vertices.
\item %Recall that $(u,v)$ is the inserted edge.
If $u$ and $v$ are both in $\cal{M}$ (recall that we are considering an edge insertion $(u,v)$), we remove $v$ from $\cal{M}$. We note that this step is necessary because $v$ needs to be removed from $\cal{M}$ and this may not have happened in step 2.
\item Note that for each vertex $w$ removed from $\cal{M}$ in the previous steps, only $R_w$ and some vertices in $A_w$ are in $S^+$. Hence we need to evaluate whether $N^+(w)$ and the remaining vertices in $A_w$ need to enter $\cal{M}$. For each vertex $x \in N^+(w) \cup A_w$, we check whether $x$ has a neighbor in $\cal{M}$. To do this, we scan $N^+(x)$ and if $N^+(x) \cap \cal M=\emptyset$ and $|M^-(x)|=0$, then we add $x$ to $\cal{M}$. (Recall that the data structure maintains $M^-(w)=N^-(w)\cap \cal{M}$.)
\end{enumerate}
\end{mdframed}

\subsubsection{Proof of Lemma~\ref{lem:M}}
While we build $S^+$ and $S^-$, every vertex in $\cal{M}$ that is the out-neighbor of a vertex in $S^+$ is added to $S^-$. The only vertices we remove from $\cal M$ are $v$ and out-neighbors of vertices in $S^+$. Thus, we have the following observation.

\begin{observation}\label{obs} Every vertex that we remove from $\cal{M}$ (including $v$) is in $S^-$.\end{observation}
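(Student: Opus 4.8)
The plan is to read off the claim directly from the Stage-2 pseudocode by checking every place where a vertex can leave $\cal{M}$. Of the four steps in Section~\ref{sec:MS}, Step~1 and Step~4 only \emph{insert} vertices into $\cal{M}$ (Step~4 explicitly performs additions, Step~1 sets $\cal{M}\leftarrow\cal{M}\cup\cal{M}'$), so the only deletions occur in Step~2 (removing $N^+(w)\cap\cal{M}$ for each $w\in\cal{M}'$) and in Step~3 (removing $v$). For Step~3 there is nothing to prove: $v$ is placed into $S^-$ in the very first line of the Stage-1 construction. Hence the entire content of the observation is that every vertex deleted in Step~2 already lies in $S^-$.

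To handle Step~2, fix $w\in\cal{M}'$ and consider a vertex $x\in N^+(w)$ that is removed from $\cal{M}$, i.e.\ $x$ is in $\cal{M}$ at the moment Step~2 runs. Since $\cal{M}'\subseteq S^+$ is an MIS of the graph induced by $S^+$, it is in particular independent, so $x\notin\cal{M}'$; therefore $x$ was already in $\cal{M}$ \emph{before} Stage~2 began (the only vertices Step~1 added to $\cal{M}$ are those of $\cal{M}'$). In other words $x$ is an out-neighbor of the vertex $w\in S^+$ and $x$ belongs to the MIS as it stands going into the update. Now recall that during Stage~1, Step~2 of \textsc{Process}($w'$) adds $N^+(x')\cap\cal{M}$ to $S^-$ for every vertex $x'$ that it adds to $S^+$; since $S^+$ only grows and every vertex enters $S^+$ inside some call to \textsc{Process} (its Step~1 is the sole place $S^+$ is enlarged), by the time Stage~1 terminates every vertex of $S^+$ has had its MIS-out-neighbors swept into $S^-$. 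Applying this to $w\in S^+$ gives $x\in N^+(w)\cap\cal{M}\subseteq S^-$. Combining the Step~2 and Step~3 cases yields that every vertex removed from $\cal{M}$, including $v$, lies in $S^-$.

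I expect the only delicate point — and the one the write-up should spell out — is the identification of ``$\cal{M}$'' across the two stages: Stage~1 reads the MIS as it is before the update (the data structure is static during the phases in which $S^+$ and $S^-$ are constructed), whereas Step~2 of Stage~2 deletes out-neighbors from a partially modified $\cal{M}$. The resolution is exactly the independence of $\cal{M}'$ used above, which guarantees that the members of the modified $\cal{M}$ that are out-neighbors of a given $w\in\cal{M}'$ are precisely old members, so Stage~1's bookkeeping already accounts for all of them. (One may additionally note, as the pseudocode does, that $N^-(w)\cap\cal{M}=\emptyset$ for $w\in\cal{M}'$ since $S^+$ contains no resolved vertices, so no in-neighbor of a $\cal{M}'$-vertex ever needs to be, or is, removed.) Beyond this, the argument is pure bookkeeping and requires no invariant other than ``$\cal{M}'$ is independent''.
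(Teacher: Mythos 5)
Your proof is correct and follows essentially the same route as the paper, which simply notes that the only removals are $v$ (placed in $S^-$ at the start of Stage~1) and out-neighbors of $S^+$-vertices in $\cal{M}$ (all swept into $S^-$ by Step~2 of \textsc{Process}). Your extra care about which version of $\cal{M}$ is meant in Stage~2's Step~2 --- resolved via the independence of $\cal{M}'$ --- is a detail the paper leaves implicit, and it is handled correctly.
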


\begin{enumerate}
\item At least $\frac{|S^+|}{2 \alpha}$ vertices are added to $\cal M$ and at most $| S^-|$ vertices are deleted from $\cal{M}$.
\end{enumerate}

By Lemma~\ref{lem:MIS}, in step 1 we add at least $\frac{|S^+|}{2 \alpha}$ vertices to $\cal{M}$. Observation~\ref{obs} implies that we delete at most $|S^-|$ vertices from $\cal{M}$.

\begin{enumerate}
\item[2.] The set of vertices added to $\cal{M}$ is disjoint from the set of vertices removed from $\cal{M}$.
\end{enumerate}

We will show that the set of vertices added to $\cal{M}$ were not in $\cal{M}$ when the edge update arrived and the set of vertices removed from $\cal{M}$ were in $\cal{M}$ when the edge update arrived.

By Observation~\ref{obs}, every vertex that we remove from $\cal{M}$ is in $S^-$. By construction, a vertex is only added to $S^-$ if it is in $\cal{M}$.

Every vertex that is added to $\cal{M}$ is added in either step 1 or step 4. We claim that every vertex that is added to $\cal{M}$ is the neighbor of a vertex in $S^-$, and thus was not in $\cal{M}$ when the edge update arrived. With Observation~\ref{obs}, this claim is clear for vertices added to $\cal{M}$ during step 4. If a vertex $w$ is added to $\cal{M}$ in step 1 then $w\in S^+$. By construction, a vertex is only added to $S^+$ if it is a neighbor of a vertex in $S^-$.

\begin{enumerate}
\item[3.] $\cal{M}$ is an MIS.
\end{enumerate}
$\cal{M}$ is an MIS because our algorithm is designed so that when we add a vertex $w$ to $\cal{M}$ we remove all of $w$'s neighbors from $\cal{M}$, and when we remove a vertex $w$ from $\cal{M}$ we add to $\cal{M}$ all of $w$'s neighbors that have no neighbors in $\cal{M}$. A formal proof follows.

Suppose by way of contradiction that $\cal M$ is not an independent set. Let $(w, x)$ be an edge such that $w$ and $x$ are both in $\cal{M}$. Due to step 3, $(w, x)$ cannot be $(u, v)$ or $(v, u)$. Since $\cal M$ was an independent set before being updated, at least one of $w$ or $x$ was added to $\cal M$ during the update. Without loss of generality, suppose $w$ was added to $\cal M$ during the update and that the last time $w$ was added to $\cal M$ was after the last time $x$ was added to $\cal{M}$. This means that when $w$ was added to $\cal{M}$, $x$ was already in $\cal{M}$.

\noindent{\bf Case 1:} \emph{w was last added to $\cal M$ during step 4.} In step 4, only vertices with no neighbors in $\cal M$ are added to $\cal M$ so this is impossible.

\noindent{\bf Case 2:} \emph{w was last added to $\cal M$ during step 1 and $x$ was last added to $\cal M$ before the current update to $\cal{M}$.} After $w$ is added to $\cal M$ in step 1, we remove $N^+(v) \cap \cal M$ from $\cal{M}$. Thus, $x \not\in N^+(v)$ so $x$ must be in $N^-(v)$. But $x$ was in $\cal M$ right before the current update so $w$ is resolved. Since $w$ was added to $\cal{M}$ during step 1, $w \in S^+$ and by construction the only vertices added to $S^+$ are those in the active set or residual set of some vertex. By the Consistency Invariant, because $w$ is resolved, it is not in the active set or residual set of any vertex, a contradiction.

\noindent{\bf Case 3:} \emph{$w$ was last added to $\cal M$ during step 1 and $x$ was last added to $\cal M$ during the current update during step 1.} Both $x$ and $w$ are in $S^+$. An MIS in the graph induced by $S^+$ is added to $\cal M$ so $x$ and $w$ could not both have been added to $\cal{M}$, a contradiction.\\

Now, suppose by way of contradiction that $\cal M$ is not maximal. Let $w$ be a vertex not in $\cal M$ such that no vertex in $N(w)$ is in $\cal{M}$. Since $\cal M$ was maximal before the current update, during this update, either $w$ or one of its neighbors was removed from $\cal{M}$. Let $x$ be the last vertex in $N(w) \cup \{w\}$ to be removed from $\cal{M}$. Every vertex that is removed from $\cal M$ (either in step 2 or step 3) has a neighbor in $\cal M$ at the time its removal. Thus, $x$ cannot be $w$ because if $x$ were $w$ then $w$ would be removed from $\cal{M}$ with no neighbors in $\cal{M}$. Thus, $x$ must be in $N(w)$. Since $x$ is removed from $\cal{M}$ during the current update, in step 4 we add to $\cal M$ all vertices in $N^+(x) \cup A_x$ with no neighbors in $\cal{M}$. Thus $w$ is not in $N^+(x) \cup A_x$ so $w$ must be in $Z_x$, $P_x$, or $R_x$.

\noindent{\bf Case 1:} \emph{$w \in Z_x$.} By the definition of resolved, right before the current update, either $w$ was in $\cal{M}$ or there was a vertex $y\in N^-(w)$ in $\cal{M}$.  If $w$ was in $\cal{M}$ then $w$ was removed from $\cal{M}$ during the current update. Every vertex that is removed from $\cal{M}$ (either in step 2 or step 3) has a neighbor that was added to $\cal{M}$ during the current update. By part 2 of the Lemma, the set of vertices that is added to $\cal{M}$ is disjoint from the set of vertices that is removed from $\cal{M}$. Thus, after processing the current update, $w$ has a neighbor in $\cal{M}$, a contradiction.

On the other hand, if $y\in N^-(w)$ was in $\cal{M}$ right before the current update, then during the current update, $y$ was removed from $\cal{M}$. Then in step 4, all vertices in $N^+(y)$ with no neighbors in $\cal M$ are added to $\cal{M}$. This includes $w$, a contradiction.

\noindent{\bf Case 2:} \emph{$w\in P_x$.} Right before the current update, $w$ was in $A_y$ for some $y$. During the current update, $y$ was removed from $\cal{M}$. In step 4, all vertices in $A_y$ with no neighbors in $\cal M$ are added to $\cal{M}$. This includes $w$, a contradiction.

\noindent{\bf Case 3:} \emph{$w\in R_x$.} By Observation~\ref{obs}, every vertex removed from $\cal{M}$ during step 2 is in $S^-$. By construction for all vertices $y \in S^-$, if $R_y$ is nonempty then we add every vertex $R_y$ to $S^+$. Thus, if $x$ was removed from $\cal M$ in step 2, then $w$ was added to $S^+$ in step 1. Also, if $x$ was removed from $\cal M$ in step 3, then $x=v$ and \textsc{Process(x)} was called, which adds $w$ to $S^+$. So, in either case $w$ is in $S^+$. We add an MIS in the graph induced by $S^+$ to $\cal{M}$, so either $w$ or a neighbor of $w$ must be in $\cal{M}$, a contradiction.

\section{Runtime analysis of updating $\cal{M}$}\label{sec:Mrun}

Let $T$ be the amortized update time of the black box dynamic edge orientation algorithm and let $D$ is the out-degree of the orientation. Ultimately, we will apply the algorithm of Brodal and Fagerberg~\cite{BrodalF99} which achieves $D=O(\alpha)$ and $T=O(\alpha+\log n)$. In this section we show that updating $\cal M$ takes amortized time $O(\alpha D)$.

Let $U$ be the total number of updates and let $\Delta^+$ and $\Delta^-$ be the total number of additions of vertices to $\cal{M}$ and removals of vertices from $\cal{M}$, respectively, over the whole computation.

Let $S^+_i$ and $S^-_i$ be the sets $S^+$ and $S^-$, respectively, constructed while processing the $i^{th}$ update (note that this is a different definition of $S^+_i$ than the definition in earlier sections).

\begin{lemma}\label{lem:delta} The amortized number of changes to $\cal M$ is $O(1)$ per update.\end{lemma}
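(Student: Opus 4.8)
The plan is to bound the total number of changes to $\mathcal{M}$ over the whole computation, $\Delta^+ + \Delta^-$, by $O(U)$ via a potential-function argument, where the potential is the number of vertices not in $\mathcal{M}$ (as announced in the introduction). First I would observe that each edge update causes at most one ``spontaneous'' removal from $\mathcal{M}$: in \textsc{Delete} no vertex is ever removed from $\mathcal{M}$, and in \textsc{Insert} we initiate the process only when both endpoints are in $\mathcal{M}$, in which case the vertex $v$ is the single vertex that must be removed. All other removals from $\mathcal{M}$ are of vertices in $S^-$ that are removed precisely because a neighbor in $S^+$ was added. So it suffices to analyze, per update, the quantities $|S^+ \cap \mathcal{M'}|$ (additions in Stage 2, Step 1), the additions in Step 4, and $|S^-|$ (an upper bound on removals, by Observation~\ref{obs}).

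Next I would combine Lemma~\ref{lem:bigs+} with Lemma~\ref{lem:M}. By Lemma~\ref{lem:M}, processing an insertion adds at least $\frac{|S^+|}{2\alpha}$ vertices to $\mathcal{M}$ and removes at most $|S^-|$. By Lemma~\ref{lem:bigs+}, whenever $|S^-| > 1$ we have $|S^+| \geq 4\alpha |S^-|$, so the number of additions is at least $\frac{|S^+|}{2\alpha} \geq 2|S^-|$, which is at least twice the number of removals; and when $|S^-| \leq 1$ the update causes $O(1)$ changes directly. Hence on every update the net change to $\mathcal{M}$ (additions minus removals) is nonnegative, and in fact (additions) $\geq$ (removals) always. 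Letting $\Phi$ denote the number of vertices not in $\mathcal{M}$, each update decreases $\Phi$ by (additions $-$ removals) $\geq 0$, except possibly for a single unit increase coming from the one spontaneous removal of $v$ in the insert case, and from the fact that an edge insertion can add one new vertex outside $\mathcal{M}$ in the trivial cases. Since $\Phi \geq 0$ always and $\Phi$ starts at $n$ (or at $0$, depending on how one counts the empty initial graph — in any case $\Phi$ is bounded), the total increase in $\Phi$ over all $U$ updates is $O(U)$, so the total decrease is also $O(U + n)$; but the total decrease equals $\sum_i (|\text{additions}_i| - |\text{removals}_i|)$, i.e. $\Delta^+ - \Delta^-$ up to the $O(U)$ correction terms. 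Combined with $\Delta^+ \geq \Delta^-$ — more precisely with the per-update relation (additions) $\geq 2\cdot$(removals) $- O(1)$ — this yields $\Delta^+ + \Delta^- = O(U + n)$, hence $O(U)$ amortized (absorbing the additive $n$ into the first $n$ updates, or noting $U \geq n-1$ when the graph is nonempty).

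The main obstacle I anticipate is handling Step 4 of Stage 2 carefully: Step 4 adds vertices to $\mathcal{M}$ beyond those in $S^+$, namely certain out-neighbors and leftover active-set members of removed vertices, and one must confirm these are also charged correctly — i.e. that they too are previously-outside-$\mathcal{M}$ vertices (this is exactly part 2 of Lemma~\ref{lem:M}, which guarantees the added set is disjoint from the removed set), so each such addition genuinely decreases $\Phi$ and cannot be double-counted against a removal. A secondary subtlety is that a single removed vertex can trigger multiple additions in Step 4, so one must make sure the bound ``additions $\geq$ removals'' is not accidentally run in the wrong direction; the safe route is to bound removals directly by $|S^-|$ via Observation~\ref{obs}, and additions directly from below by $\frac{|S^+|}{2\alpha}$ via Lemma~\ref{lem:M}(1), and never try to match individual additions to individual removals. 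Once the potential bookkeeping is set up this way the lemma follows, and I would conclude by stating the amortized bound explicitly: $\Delta^+ + \Delta^- = O(U)$, i.e. $O(1)$ amortized changes to $\mathcal{M}$ per update.
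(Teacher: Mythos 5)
Your proposal is correct and follows essentially the same route as the paper: bound removals per update by $|S^-|$ and additions from below by $|S^+|/(2\alpha)$ via Lemma~\ref{lem:M}, invoke Lemma~\ref{lem:bigs+} to get additions at least a constant factor times removals up to $O(1)$ per update, and combine with the fact that the graph starts empty (so $\mathcal{M}=V$ initially and $\Delta^+\leq\Delta^-$) to conclude $\Delta^++\Delta^-=O(U)$. The only wobble in your write-up --- whether the potential starts at $n$ or $0$ --- is settled by that same observation (it starts at $0$), so no additive $n$ and no assumption like $U\geq n-1$ is needed.
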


\begin{proof} The proof is based on the observation that each edge update that causes $|\cal{M}|$ to decrease, decreases $|\cal{M}|$ by 1. Thus, on average $|\cal{M}|$ can only increase by at most 1 per update. Then, we apply Lemma~\ref{lem:bigs+} to argue that the total number of changes to $\cal{M}$ per update is within a constant factor of the net change in $|\cal{M}|$ per update. The details follow.%Note that increases in $|\cal{M}|$ can be due to both additions to and removals from $\cal{M}$.

%We will show that $|\cal{M}|$ does not decrease by more than 1 per update, which implies that on average, $|\cal{M}|$ does not increase by more that 1 per update.
%
%For every deletion, no vertices are removed from $\cal{M}$. By Lemma~\ref{lem:}, for each insertion, either at most one vertex is removed from $\cal{M}$ or blah. In the latter case, by Lemma~\ref{lem:M} at least $\frac{S^+}{2\alpha}\leq $ vertices are added to $\cal{M}$ and at most $|S^-|$ vertices are removed from $\cal{M}$. In this case, $|\cal{M}|$ increases. Therefore, $|\cal{M}|$ does not decrease by more than 1 per update, which implies that on average, $|\cal{M}|$ does not increase by more that 1 per update.
%
%By Lemma~\ref{lem:M},

For every edge deletion, at most 1 vertex is added to $\cal M$ and no vertices are removed from $\cal M$. For every edge insertion, Lemma~\ref{lem:M} implies that at most $|S^-|$ vertices are removed from $\cal{M}$ and at least $\frac{|S^+|}{2 \alpha}$ vertices are added to $\cal{M}$. Thus, the net increase in $|\cal{M}|$ is at least $\frac{|S^+|}{2 \alpha}-|S^-|$, which by Lemma~\ref{lem:bigs+} is at least $|S^-|$ if $|S^-|>1$.
Otherwise, $|S^-|\leq 1$ so the net decrease in $|\cal{M}|$ is at most 1. Thus, the increase in $|\cal{M}|$ over the whole computation, which is equivalent to $\Delta^+-\Delta^-$, is at least $\sum_{i=1}^U |S^-_i|-U$. For every edge update at most $|S^-|$ vertices are removed from $\cal{M}$ (Lemma~\ref{lem:M}), so $\sum_{i=1}^U |S^-_i|\geq \Delta^-$. Thus, we have \begin{equation}\label{eqn:deltas}
\Delta^+-\Delta^-\geq \sum_{i=1}^U |S^-_i|-U \geq \Delta^--U.
\end{equation}

We now show that $\Delta^- = O(U)$ and $\Delta^+ = O(U)$. Because the graph is initially empty, $\cal{M}$ initially contains every vertex. Thus, over the whole computation, the number of additions to $\cal{M}$ cannot exceed the number of removals from $\cal{M}$. That is, $\Delta^+\leq \Delta^-$.

%Thus, $\Delta^-\leq \sum_{i=1}^U |S^-_i|$ and $\Delta^+\geq\sum_{i=1}^U \frac{|S^+_i|}{2 \alpha}$.
%
%Thus, we have:
%\begin{align*}
%\Delta^- - \Delta^+&\leq \sum_{i=1}^U \left(|S^-_i|-\frac{|S^+_i|}{2 \alpha}\right)\\
%		&\leq \sum_{i=1}^U \left(|S^-_i|-4\alpha \frac{|S^-_i|-1}{2\alpha}\right) \text{ by Lemma~\ref{lem:bigs+}}\\
%		&=2U-\sum_{i=1}^U |S^-_i|\\
%		&\leq 2U-\Delta^-
%		\end{align*}

Combining this with Equation~\eqref{eqn:deltas}, we have
%$2U-\Delta^-\geq 0$ so $\Delta^-\leq 2U$. Then by Equation~\eqref{eqn:delta}, $\Delta^+\leq 2U$. Thus,
$\Delta^- = O(U)$ and $\Delta^+ = O(U)$.
\end{proof}

\begin{lemma}\label{lem:sumS}$\sum_{i=1}^U |S^+_i| =O(\alpha U)$.\end{lemma}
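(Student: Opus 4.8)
The plan is to bound $\sum_{i=1}^U |S^+_i|$ by relating the size of each $S^+$ to the number of vertices actually added to $\cal M$ during the corresponding update, and then invoke Lemma~\ref{lem:delta} to control the total number of MIS changes. By Lemma~\ref{lem:M}, during update $i$ at least $\frac{|S^+_i|}{2\alpha}$ vertices are added to $\cal M$; hence $|S^+_i| \le 2\alpha \cdot (\text{number of vertices added to }\cal M\text{ during update }i)$. Summing over all $U$ updates, $\sum_{i=1}^U |S^+_i| \le 2\alpha \sum_{i=1}^U (\text{additions during update }i) = 2\alpha \, \Delta^+$. (For edge deletions and for insertions that terminate immediately, either no $S^+$ is constructed or $S^+ = \emptyset$, so those updates contribute nothing; in any case the inequality $|S^+_i| \le 2\alpha \cdot (\text{additions during update }i)$ holds trivially.)

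Next I would apply Lemma~\ref{lem:delta}, which states that the amortized number of changes to $\cal M$ is $O(1)$ per update, so in particular $\Delta^+ = O(U)$ (this is proved explicitly in the proof of Lemma~\ref{lem:delta}). Substituting, $\sum_{i=1}^U |S^+_i| \le 2\alpha\, \Delta^+ = O(\alpha U)$, which is exactly the claimed bound.

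I do not anticipate a serious obstacle here: the argument is a two-line chain combining Lemma~\ref{lem:M} (part 1) with the already-established fact $\Delta^+ = O(U)$ from Lemma~\ref{lem:delta}. The only point requiring a moment's care is making sure that the per-update inequality $|S^+_i| \le 2\alpha \cdot (\text{additions during update }i)$ is applied correctly to the "trivial" updates — those where $S^+$ is never built (edge deletions, or insertions where not both endpoints are in $\cal M$) or where $S^+$ ends up empty because $A_v$ is not full and we terminate with $S^- = \{v\}$. In all these cases $|S^+_i| = 0$ and the bound holds vacuously, so summing is safe. Thus the lemma follows.
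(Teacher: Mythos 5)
Your proof is correct and follows exactly the paper's argument: bound $\sum_i |S^+_i| \le 2\alpha\,\Delta^+$ via part 1 of Lemma~\ref{lem:M}, then use $\Delta^+ = O(U)$ from Lemma~\ref{lem:delta}. The remark about trivial updates contributing $|S^+_i|=0$ is a harmless extra check that the paper leaves implicit.
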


\begin{proof} By Lemma~\ref{lem:M}, $\Delta^+\geq\sum_{i=1}^U \frac{|S^+_i|}{2 \alpha}$. By Lemma~\ref{lem:delta}, $\Delta^+= O(U)$.
Thus, $\sum_{i=1}^U |S^+_i| =O(\alpha U)$.
\end{proof}

\begin{lemma} Constructing $S^+$ and $S^-$ takes amortized time $O(\alpha D)$.\end{lemma}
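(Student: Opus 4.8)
The plan is to account separately for the work done in Stage 1 (constructing $S^+$ and $S^-$) across the whole computation, showing the total is $O(\alpha D \cdot U)$. The key observation is that essentially all the work in a single call to \textsc{Process}($w$) is proportional to the number of vertices that get added to $S^+$ during that call, times a factor of $D$ coming from scanning out-neighborhoods. Specifically, in Step 1 we add a bucket (or two) of $A_w$, together with $R_w$; but $R_w$ is only nonempty when $A_w$ is full, in which case $R_w$ is also added to $S^+$, so the cost of Step 1 is $O(|A_w(b) \cup R_w|)$ or $O(|A_w(i) \cup A_w(i+1)|)$, both $O(|X_w|)$ where $X_w$ is the set of vertices added to $S^+$ in this call. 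In Step 2, for each such $x$ we scan $N^+(x)$, which has size at most $D$, so this costs $O(D |X_w|)$. Step 3 and the queue bookkeeping (maintaining the partition of $S^-$ into processed/unprocessed and full/not-full) can be charged $O(1)$ per vertex added to $S^-$, and since every vertex added to $S^-$ in Step 2 lies in some $N^+(x)$ already scanned, this is absorbed into the $O(D|X_w|)$ term. Summing over all calls to \textsc{Process} during the processing of update $i$, the sets $X_w$ are disjoint subsets of $S^+_i$ (by the Consistency Invariant, as used in the proof of Lemma~\ref{lem:sp}), so the total Stage 1 work for update $i$ is $O(D |S^+_i|)$, plus an $O(1)$ startup cost for checking whether $A_v$ is full.

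The next step is to sum over all $U$ updates. By Lemma~\ref{lem:sumS}, $\sum_{i=1}^U |S^+_i| = O(\alpha U)$, so the total Stage 1 work over the whole computation is $O(D \cdot \alpha U) = O(\alpha D \cdot U)$, which is $O(\alpha D)$ amortized per update, as claimed. The $O(1)$ per-update startup costs contribute only $O(U)$, which is subsumed.

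The main obstacle I expect is being careful about one subtlety: the termination condition re-checks whether Lemma~\ref{lem:bigs+} is satisfied after every call to \textsc{Process} returns (including recursive calls), and each such check costs $O(1)$; moreover, when $\cal{Q}$ empties we enqueue a new batch consisting of all currently-unprocessed vertices in $S^-$, which naively costs time proportional to $|S^-|$ rather than to the newly added vertices. However, we only enqueue at most $b = O(\log n)$ batches (Lemma~\ref{lem:full} shows the algorithm terminates before the end of epoch $b$), and more to the point the unprocessed vertices are exactly those we are about to process, so the enqueue cost for a batch is charged to the subsequent \textsc{Process} calls on those vertices — each of which, as argued above, costs at least $\Omega(1)$ and is already accounted for. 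Similarly, the re-check after each \textsc{Process} call is charged $O(1)$ to that call. The only genuinely new worry is that maintaining the partition of unprocessed vertices of $S^-$ according to whether $A_w$ is full could incur extra cost when a vertex's status changes, but in Stage 1 the data structure (and hence fullness of $A_w$) is static, so each vertex's bucket membership is determined once when it enters $S^-$ and costs $O(1)$. Hence the charging scheme goes through and the bound $O(\alpha D)$ amortized follows.
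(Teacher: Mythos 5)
Your overall charging scheme is the paper's own: bound the Stage-1 work for update $i$ by $O(D\,|S^+_i|)$ and then apply Lemma~\ref{lem:sumS}. However, the step where you sum over the calls to \textsc{Process} contains a genuine error: the sets $X_w$ of vertices added to $S^+$ in different calls are \emph{not} disjoint. The Consistency Invariant only guarantees that a vertex lies in at most one \emph{active} set; a vertex in a residual set lies, by that same invariant, in $R_w$ for \emph{every} out-neighbor $w$, so it can be added to $S^+$ in up to $D$ distinct calls to \textsc{Process} (the paper states this explicitly: ``the same vertex could be added to $S^+$ multiple times''). Likewise, the proof of Lemma~\ref{lem:sp} that you cite only establishes that active-set vertices are added at most once --- which is all that lower bound needs --- and says nothing that would make residual-set additions unique. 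With your accounting, summing $O(D\,|X_w|)$ over calls only yields $O\bigl(D\sum_w |X_w|\bigr)$, which can be as large as $O(D^2|S^+_i|)$, i.e.\ amortized $O(\alpha D^2)=O(\alpha^3)$, weaker than the claimed $O(\alpha D)$ (and enough to degrade the final $O(\alpha^2\log^2 n)$ bound when $\alpha$ is large).

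The fix is exactly the paper's argument: a vertex $x$ can be (re-)added to $S^+$ only when \textsc{Process} is called on one of its out-neighbors, and \textsc{Process} is called at most once per vertex per update, so $x$ is added at most $|N^+(x)|\le D$ times, each re-addition costing only $O(1)$; the expensive part --- scanning $N^+(x)$ to add $N^+(x)\cap{\cal M}$ to $S^-$ --- is charged once per \emph{distinct} vertex of $S^+$ (re-additions are no-ops under a standard membership flag). This yields $O(D\,|S^+_i|)$ per update, after which your appeal to Lemma~\ref{lem:sumS} and your treatment of the bookkeeping (termination checks, queue batches, the processed/unprocessed and full/not-full partitions, and the observation that the data structure is static during Stage 1) are correct and in fact more careful than the paper's.
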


\begin{proof} By construction, it takes constant time to decide to add a single vertex to $S^+$, however, the same vertex could be added to $S^+$ multiple times. A vertex $w$ can be added to $S^+$ only when \textsc{Process}($x$) is called for some $x\in N^+(w)$. \textsc{Process} is called at most once per vertex (per update), so each vertex is added to $S^+$ a maximum $|N^+(w)|\leq D$ times.
%If $w$ is in an active set, let $x=a(w)$. Then by the Consistency Invariant, for all $y\not=x$, $w$ is in neither $A_y$ or $R_y$. Thus, $w$ is only added to $S^+$ once. On the other hand, if $w$ is in a residual set $R_x$ then $w$ is in another residual set $R_y$ if and only if $y$ is in $N^+(w)$. Thus, $w$ could be added to $S^+$ at most $D$ times.
 Therefore, over the whole computation it takes a total of $O(D \sum_{i=1}^U |S^+_i|)$ time to build $S^+$.

To build $S^-$, we simply scan the out-neighborhood of every vertex in $S^+$. Thus, over the whole computation it takes $O(D \sum_{i=1}^U |S^+_i|)$ time to build $S^-$.

By Lemma~\ref{lem:sumS}, $\sum_{i=1}^U (|S^+_i|) =O(\alpha U)$ so the amortized time to construct $S^+$ and $S^-$ is $O(\alpha D)$.
\end{proof}

%Baruch 6-16
\begin{lemma} Updating $\cal M$ given $S^+$ and $S^-$ takes amortized time $O(D^2\log n)$.\end{lemma}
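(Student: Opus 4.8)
The plan is to account separately for the cost of each of the four steps of the Stage~2 algorithm (Section~\ref{sec:MS}), expressing each as a function of $|S^+|$ and $|S^-|$, and then to sum over all $U$ updates and amortize using Lemmas~\ref{lem:sumS} and~\ref{lem:delta}. Throughout I would use the elementary fact that $\alpha = O(D)$ for any out-degree orientation of a graph of arboricity $\alpha$ (the densest subgraph forces some vertex to have out-degree $\Omega(\alpha)$), so that all bounds can be phrased purely in terms of $D$.

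First I would dispose of Steps 1--3. In Step~1, before applying the algorithm of Lemma~\ref{lem:MIS} we must materialize the graph induced by $S^+$: since every edge with both endpoints in $S^+$ is an outgoing edge of exactly one of them in the maintained orientation, we obtain all such edges by scanning $N^+(x)$ for each $x \in S^+$ and testing membership in $S^+$ in $O(1)$ time, at total cost $O(|S^+| \cdot D)$; running the algorithm of Lemma~\ref{lem:MIS} on the resulting graph costs a further $O(|S^+| \cdot \alpha) = O(|S^+| \cdot D)$. Step~2 scans $N^+(w)$ for each $w \in {\cal M'}$, hence costs $O(|{\cal M'}| \cdot D) = O(|S^+| \cdot D)$, and Step~3 is $O(1)$. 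So Steps 1--3 together cost $O(|S^+| \cdot D)$.

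The main obstacle is Step~4, which dominates the running time. By Observation~\ref{obs} every vertex removed from ${\cal M}$ belongs to $S^-$, so Step~4 iterates over at most $|S^-|$ vertices $w$. For each such $w$ it inspects every $x \in N^+(w) \cup A_w$; here $|N^+(w)| \le D$, and the active set obeys $|A_w| \le s b = O(\alpha \log n) = O(D \log n)$ (recall $s = 8\alpha$ and $b = \log_2 n + 1$), which is precisely where the extra logarithmic factor enters. Deciding whether a fixed $x$ has a neighbor in ${\cal M}$ costs $O(D)$ time: scan $N^+(x)$ and consult $M^-(x)$, using an $O(1)$-time membership test for ${\cal M}$. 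Hence processing one $w$ costs $O((D + D\log n) \cdot D) = O(D^2 \log n)$, and Step~4 costs $O(|S^-| \cdot D^2 \log n)$ in total.

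Finally I would amortize over the $U$ updates. Summing Steps 1--3 gives a total of $O(D \sum_{i=1}^U |S^+_i|) = O(\alpha D U) = O(D^2 U)$ by Lemma~\ref{lem:sumS}. For Step~4, Equation~\eqref{eqn:deltas} together with the inequality $\Delta^+ \le \Delta^-$ proved inside Lemma~\ref{lem:delta} gives $\sum_{i=1}^U |S^-_i| \le \Delta^+ - \Delta^- + U \le U$, so Step~4 costs $O(D^2 \log n \cdot U)$ over the whole computation. Dividing the two totals by $U$ yields the claimed $O(D^2 \log n)$ amortized bound. I expect the only delicate point to be the Step~4 bookkeeping just outlined: one charges the expensive scans only to the at most $|S^-|$ vertices genuinely removed from ${\cal M}$, pays the full cost of scanning each active set $A_w$, and invokes $\sum_i |S^-_i| = O(U)$ so that this does not degrade the amortized cost beyond $O(D^2 \log n)$.
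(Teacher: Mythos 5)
Your proposal is correct and follows essentially the same route as the paper: a step-by-step cost accounting with Steps 1--3 costing $O(|S^+|D)$ and amortized away via Lemma~\ref{lem:sumS}, and Step 4 charged to the removed vertices at $O(D^2\log n)$ each. The only (immaterial) difference is bookkeeping: you amortize Step 4 by deriving $\sum_i |S^-_i|\le U$ from Equation~\eqref{eqn:deltas} and $\Delta^+\le\Delta^-$, whereas the paper charges $O(D^2+\alpha D\log n)$ per removal and invokes $\Delta^-=O(U)$ from Lemma~\ref{lem:delta}; these are equivalent since $\Delta^-\le\sum_i|S^-_i|$.
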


\begin{proof} %We keep $M^-(w)$ updated for all $w$ while updating $\cal{M}$. To do this, whenever we a vertex $u$ is added to or removed from $\cal{M}$, we notify $u$'s out-neighborhood. This takes $O(D)$ time per change to $\cal{M}$. By Lemma~\ref{lem:delta}, the amortized number of changes to $\cal M$ is $O(1)$ per update, so the total time spent updating all $M^-(w)$ is $O(D)$.

In step 1 we begin by constructing the graph induced by $S^+$. To do this, it suffices to scan the out-neighborhood of each vertex in $S^+$ since every edge in the graph induced by $S^+$ must be outgoing of some vertex in $S^+$. This takes time $O(|S^+| D)$. Next, we find $\cal M'$, which takes time $O(|S^+|)$ by Lemma~\ref{lem:MIS}.

In step 2, for each vertex $w$ in $\cal M'$, we scan $N^+(v)$, which takes time $O(|S^+| D)$. By Lemma~\ref{lem:sumS}, steps 1 and 2 together take amortized time $O(\alpha D)$.

Step 3 takes constant time.

%Baruch 6-16
In step 4, we scan the 2-hop out-neighborhood of each vertex $w$ that has been removed from $\cal{M}$, as well as the out-neighborhood of some vertices in $A_w$. Over the whole computation, this takes time $O(\Delta^-(D^2+\alpha D\log n))$, which is amortized time $O(D^2+\alpha D\log n)$ by Lemma~\ref{lem:delta}.

In total the amortized time is $O(D^2+\alpha D\log n)=O(D^2\log n)$.
\end{proof}

%The following is an outline of how we update $\cal{M}$ given $S^+$ and $S^-$. We defer the full description to the appendix. We begin by finding a large MIS $\cal{M'}$ in the graph induced by $S^+$ and adding the vertices in $\cal{M'}$ to $\cal{M}$. This may cause $\cal{M}$ to no longer be an independent set, so we remove from $\cal{M}$ the vertices adjacent to those in $\cal{M'}$. By design, all of these removed vertices are in $S^-$. Now, $\cal{M}$ may no longer be maximal, so we add to $\cal{M}$ the appropriate neighbors of the removed set.
%
%We defer the following content to the appendix: stage 2 of the algorithm for updating $\cal{M}$: updating $\cal{M}$ given $S^+$ and $S^-$ (Section~\ref{sec:MS}), the runtime analysis of updating $\cal{M}$ (Section~\ref{sec:Mrun}), and the procedure for updating the data structure (Section~\ref{sec:ds}).

\section{Updating the data structure}

\subsection{Satisfying the Main Invariant}

The most interesting part of the runtime analysis involves correcting violations to the Main Invariant. This is also the bottleneck of the runtime. The Main Invariant is violated when a vertex $v$ is added to $\cal{M}$. In this case, $A_v$ is empty and needs to be populated.

We begin by analyzing the runtime of two basic processes that happen while updating the data structure: adding a vertex to some active set (Lemma~\ref{lem:add}) and removing a vertex from some active set (Lemma~\ref{lem:rem}).

Recall that $T$ is the amortized update time of the black box dynamic edge orientation algorithm and $D$ is the out-degree of the orientation.

%Recall that our algorithm uses a dynamic edge orientation algorithm as a black box. Let $T$ be the amortized update time of this algorithm and let $D$ be the out-degree of the orientation. Ultimately, we will apply the algorithm of Brodal and Fagerberg~\cite{BrodalF99} to get $D=O(\alpha)$ and $T=O(\alpha+\log n)$.

\begin{lemma}\label{lem:add} Suppose vertex $v$ is not in any active set. Adding $v$ to some active set and updating the data structure accordingly takes time $O(D)$.\end{lemma}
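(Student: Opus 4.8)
The plan is to treat Lemma~\ref{lem:add} as a pure data-structure bookkeeping claim: once we have committed to a target vertex $u \in N^+(v) \cap \mathcal{M}$ whose active set $v$ is to join, the update touches only $O(1)$ information stored at $v$ itself and $O(1)$ information stored at each of $v$'s at most $D$ out-neighbors, so the total cost is $O(D)$. Nothing amortized is needed; the bound is worst case per insertion of $v$ into an active set.

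First I would fix a concrete representation of the data structure. For each vertex $w$, the sets $Z_w$ and $R_w$ are doubly linked lists, while $A_w$ and $P_w$ are each arrays of $b$ buckets, with every bucket a doubly linked list; in addition we keep for each $w$ an index $\ell(w)$ pointing to the lowest non-full bucket of $A_w$ (with a sentinel value when $A_w$ is full), which is exactly the bucket into which a new member of $A_w$ must be placed by the Full Invariant. Crucially, for every vertex $v$ and every $w \in N^+(v)$, vertex $v$ stores a handle to the list node that represents $v$ inside whichever one of $Z_w, A_w, P_w, R_w$ currently contains it; by the Orientation Invariant exactly one of these contains $v$, and since $|N^+(v)| \le D$ storing these $D$ handles per vertex is affordable and each handle supports $O(1)$-time deletion.

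Next I would carry out the operation in order. By the Consistency and Orientation Invariants, a vertex $v$ that is unresolved and lies in no active set must lie in $R_w$ for every $w \in N^+(v)$. Then: (i) read $i := \ell(u)$ in $O(1)$ time, splice $v$ into the bucket $A_u(i)$, set $B(v) \gets i$ and record $a(v) \gets u$, and if $A_u(i)$ has just become full advance $\ell(u)$ — all $O(1)$; (ii) iterate over $w \in N^+(v)$: using the stored handle, delete $v$ from $R_w$ in $O(1)$, and if $w \neq u$ splice $v$ into $P_w(i)$ and overwrite $v$'s stored handle for $w$ in $O(1)$ (for $w = u$ the handle now points into $A_u(i)$). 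Step (i) is $O(1)$ and step (ii) performs $|N^+(v)| \le D$ iterations of $O(1)$ work, so the operation runs in $O(D)$ total. The sets $M^-(\cdot)$ are untouched because $\mathcal{M}$ is not modified.

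I do not expect a genuine obstacle here. The only thing requiring care is that the auxiliary data — the pointers $\ell(w)$ and the per-edge handles — must themselves stay consistent; but each is updated in $O(1)$ by the steps above (the handles for the out-edges of $v$ costing $O(D)$ in aggregate), so no extra machinery is needed. I would also remark that, as an internal subroutine, this operation presumes the caller has already chosen $u$ so that placing $v$ into $A_u(\ell(u))$ respects the Full and Main Invariants; establishing that such a choice exists and can be found is the business of the surrounding ``Satisfying the Main Invariant'' analysis, whereas Lemma~\ref{lem:add} only bounds the running time once the target is fixed.
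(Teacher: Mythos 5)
Your proposal is correct and follows essentially the same approach as the paper's proof: place $v$ into the appropriate bucket $A_u(i)$, then for each of the at most $D$ out-neighbors $w$ do $O(1)$ work to delete $v$ from whichever set of $w$ currently holds it and insert it into $P_w(i)$, restoring the Consistency Invariant. You simply spell out the pointer/handle representation in more detail than the paper does (and note, as a harmless over-specification, that $v$ typically comes from $R_w$, whereas the paper just says ``whichever set it was previously in''), but the argument and the $O(D)$ accounting are the same.
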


\begin{proof} When we add a vertex $v$ to some $A_u(i)$, for all $w \in N^+(v)$ this could causes a violation to the Consistency Invariant. To remedy this, it suffices to remove $v$ from whichever set it was previously in with respect to $w$ (which is not $A_w$) and add it to $P_w(i)$.
\end{proof}

\begin{lemma}\label{lem:rem} Removing a vertex $v$ from some active set $A_u$ and updating the data structure accordingly takes time $O(D \log n)$.\end{lemma}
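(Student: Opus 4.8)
The plan is to bound the cost by (i) an $O(D)$ step that re-registers $v$ with its out-neighbours and (ii) a cascade of at most $b = \Theta(\log n)$ bucket shifts inside $A_u$ that closes the hole left behind by $v$, each shift costing $O(D)$; together this is $O(D) + O(D\log n) = O(D\log n)$.

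I would first update $v$'s out-neighbours. After $v$ leaves $A_u$ it lies in no active set, so by the Consistency Invariant, with respect to each $w \in N^+(v)$ the record of $v$ must be relocated into $R_w$ — or into $Z_w$ if the reason $v$ is being removed is that $v$ just became resolved (with respect to $w=u$ this relocation is exactly the move out of $A_u$). Each relocation is $O(1)$, so this phase takes $O(|N^+(v)|) = O(D)$ and restores the Orientation and Consistency Invariants for all out-neighbours of $v$.

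I would then restore the Full Invariant for $A_u$. Removing $v$ leaves a hole in $A_u(i)$ with $i = B(v)$. If $A_u(i)$ has become the topmost nonempty bucket and $R_u = \emptyset$, nothing more is needed. Otherwise I move one vertex from $A_u(i+1)$ into $A_u(i)$, which fills bucket $i$ and opens a hole in bucket $i+1$, and I repeat with $i$ incremented, so that the hole climbs one bucket per step, until it reaches the topmost nonempty bucket (which then simply holds one fewer vertex) or, if $A_u$ was full with $R_u \neq \emptyset$, I fill the last hole from $R_u$. Each single shift of a vertex $x$ from $A_u(k+1)$ to $A_u(k)$ also forces, for each of $x$'s $\le D$ out-neighbours $w$, moving $x$ from $P_w(k+1)$ to $P_w(k)$ and updating $B(x)$; that is $O(D)$ per shift, and since the hole starts in bucket $i \ge 1$ there are at most $b-1$ shifts, for a total of $O(D\log n)$.

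The part I expect to carry the weight is verifying that the cascade preserves the remaining invariants, especially the Main Invariant. The structural fact that makes it go through is that every vertex the cascade touches is moved to a strictly lower-indexed bucket: together with the Full Invariant (whenever a bucket $A_w(\ell)$ is full, so is $A_w(\ell')$ for all $\ell' < \ell$), decreasing $B(x)$ only weakens the premise of the Main Invariant for $x$, so the moved vertices never break it. The genuinely delicate case is that the topmost nonempty bucket of $A_u$ may drop below full; I would argue this preserves the Main Invariant using the Full Invariant and the way $A_u$ was populated, and it is this case — rather than the routine accounting above — that one must check carefully.
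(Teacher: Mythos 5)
There is a genuine gap, and it sits exactly where you flagged it: the case in which the hole climbs to the topmost nonempty bucket of $A_u$ and is left unfilled. Your cascade only draws replacements from higher buckets of $A_u$ itself (and finally $R_u$), so if $A_u$ is not full you end with its topmost nonempty bucket, say $A_u(j)$, one vertex short. But the Main Invariant about $A_u(j)$ is owed to \emph{third-party} vertices: any $x$ with $u\in N^+(x)\cap\mathcal{M}$ and $B(x)=j+1$, i.e.\ any $x\in P_u(j+1)$, requires $A_u(j)$ to be full. Nothing forces $P_u(j+1)$ to be empty just because $A_u(j+1)$ is empty --- such an $x$ may have been bucketed at level $j+1$ in some other active set $A_{a(x)}$ at a time when $A_u(j)$ was full --- so your procedure can leave the Main Invariant violated, and neither the Full Invariant nor ``the way $A_u$ was populated'' rescues it. Your observation that the vertices you move only decrease their own bucket index is correct but addresses the wrong side of the invariant: the danger is not the moved vertices but the passive-set vertices whose constraint references $A_u$'s buckets.

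The paper's proof fixes precisely this by enlarging the replacement pool: the hole in $A_u(i)$ may be filled by an arbitrary vertex of $R_u\cup A_u(i+1)\cup\dots\cup A_u(b)\cup P_u(i+1)\cup\dots\cup P_u(b)$. Pulling a replacement $w$ from $P_u$ means removing $w$ from the active set $A_{a(w)}$ of a \emph{different} vertex, which opens a new hole there and propagates the chain reaction across active sets; since the bucket index of the hole strictly increases at each step, the chain has length at most $b$, giving $O(bD)=O(D\log n)$ with Lemma~\ref{lem:add} applied per move. Crucially, the legitimate stopping condition is ``the pool is empty,'' which certifies that no vertex at level $i+1$ or above (in $A_u$ or in any $P_u$ bucket) exists to witness a Main Invariant violation, so leaving $A_u(i)$ short is then harmless. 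Without admitting replacements from $P_u$, both this correctness certificate and the cross-active-set bookkeeping disappear, so the missing idea is not routine accounting but the heart of the lemma. (A secondary divergence: your step (i), parking $v$ in $R_w$ for every $w\in N^+(v)$, is not part of this lemma in the paper --- the destination of $v$ is decided by the caller --- and as written it can itself break the Full Invariant when some $A_w$ is not full yet you make $R_w$ nonempty.)
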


\begin{proof} When we remove a vertex $v$ from some $A_u(i)$, this leaves bucket $A_u(i)$ not full so the Full Invariant might be violated. To remedy this, we move a vertex, the \emph{replacement vertex}, from a higher bucket or the residual set to $A_u(i)$.
%For the purposes of analysis it suffices to choose the replacement vertex $w$ from $A_u(i+1)\cup P_u(i+1)$ (or $R_u$ if $i=b$)\footnote{We note that the length of the chain reaction will be shortened if we choose the replacement vertex from the highest possible bucket. However, it could be the case that all buckets higher than $A_u(i+1)$ or  $P_u(i+1)$ are empty. We further note that we do not need to search farther than bucket $i+1$ since the Full Invariant (Invariant~\ref{inv:full}) implies that if bucket $i+1$ is empty then so are all of the higher buckets.}.
That is, the replacement vertex $w$ is chosen to be any arbitrary vertex from $R_u\cup A_u(i+1)\cup\dots\cup  A_u(b)\cup P_u(i+1)\cup \dots \cup P_u(b)$. We can choose a vertex from this set in constant time by maintaining a list of all non-empty $P_x(i)$ and $A_x(i)$ for each vertex $x$. If $w$ is chosen from $P_u$, we remove $w$ from $A_{a(w)}$ before adding $w$ to $A_u(i)$.

The removal of $w$ from its previous bucket in its previous active set may leave this bucket not full, so again the Full Invariant might be violated and again we remedy this as described above, which sets off a chain reaction. The chain reaction terminates when either there does not exist a viable replacement vertex or until the replacement vertex comes from the residual set. Since the index of the bucket that we choose the replacement vertex from increases at every step of this process, the length of this chain reaction is at most $b$.\footnote{We note that the length of the chain reaction can be shortened by choosing the replacement vertex from the highest possible bucket. However, we could be forced to choose from bucket $i+1$ (if all buckets higher than $A_u(i+1)$ or  $P_u(i+1)$ are empty).}

For each vertex $v$ that we add to an active set, we have already removed $v$ from its previous active set, so Lemma~\ref{lem:add} applies. Overall, we move at most $b$ vertices to a new bucket and by Lemma~\ref{lem:add}, for each of these $b$ vertices we spend time $O(D)$. Thus, the runtime is $O(b D)=O(D \log n)$.
\end{proof}

\begin{lemma}\label{lem:mtime} The time to update the data structure in response to a violation of the Main Invariant triggered the addition of a single vertex to $\cal M$ is $O(D \alpha \log^2 n)$.\end{lemma}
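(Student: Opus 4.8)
The plan is to track the total work triggered by adding a single vertex $v$ to $\cal{M}$, which forces $A_v$ to be populated from scratch. The key observation is that populating $A_v$ means pulling up to $bs = \Theta(\alpha \log n)$ vertices into $A_v$'s buckets. These vertices must be drawn from $R_v$ (and possibly from the residual/passive sets of other vertices, depending on exactly how the repopulation is specified), and each such transfer is an ``add to active set'' operation followed by a ``remove from active set'' operation at the vertex the transferred vertex used to belong to. So first I would invoke Lemma~\ref{lem:add} and Lemma~\ref{lem:rem}: each vertex moved into $A_v$ costs $O(D)$ to install (Lemma~\ref{lem:add}) plus $O(D\log n)$ for the chain reaction of replacement vertices it sets off at its old home (Lemma~\ref{lem:rem}). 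Multiplying the $O(D\log n)$ per-vertex cost by the $O(\alpha \log n)$ vertices that fill $A_v$ gives $O(D\alpha \log^2 n)$, matching the claimed bound.

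The steps, in order, are: (1) observe that when $v$ joins $\cal{M}$, the only invariant that can be violated is the Main Invariant (as stated in the lemma's setup and the surrounding text), because $A_v$ was empty and must now be filled; all of $v$'s in-neighbors that are unresolved and not in anyone's active set (the set $R_v$) are the natural candidates to move into $A_v$, and moving them may in turn require pulling in vertices from other active sets if $R_v$ is too small. (2) Bound the number of vertices that need to enter $A_v$: since each bucket holds at most $s = \Theta(\alpha)$ vertices and there are $b = \Theta(\log n)$ buckets, at most $bs = \Theta(\alpha \log n)$ vertices are moved into $A_v$. (3) For each vertex $w$ moved into $A_v$: if $w$ came from $R_v$ directly, the cost is $O(D)$ by Lemma~\ref{lem:add} to fix the Consistency Invariant at $w$'s out-neighbors; if $w$ had to be extracted from some other active set $A_{a(w)}$, then removing it costs $O(D\log n)$ by Lemma~\ref{lem:rem} (including the bounded-length chain of replacement vertices), and re-inserting it into $A_v$ costs an additional $O(D)$. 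In all cases the per-vertex cost is $O(D\log n)$. (4) Multiply: $O(\alpha \log n)$ vertices times $O(D \log n)$ each gives $O(D\alpha \log^2 n)$.

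I expect the main obstacle to be step (3) — specifically, making precise which vertices are pulled into $A_v$ and arguing that the associated chain reactions do not compound. The worry is that extracting a replacement vertex from another active set leaves a hole there, whose repair (via Lemma~\ref{lem:rem}) pulls from yet another active set, and one must be sure this does not cascade into a deeper-than-logarithmic recursion that multiplies the cost beyond $O(D\log n)$ per vertex filled into $A_v$. The resolution should be exactly the mechanism of Lemma~\ref{lem:rem}: within a single ``remove from active set'' repair, the bucket index of the replacement vertex strictly increases at each step, so that internal chain has length at most $b = O(\log n)$ and total cost $O(D\log n)$, and this is already fully accounted for in the per-vertex bound — the repairs triggered by filling $A_v$ do not themselves trigger further Main Invariant violations (no new vertex enters $\cal{M}$), so there is no outer recursion. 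One should also check that the Full Invariant and Main Invariant are actually restored for $A_v$ itself once it is filled bottom-up to the appropriate level, which follows because we fill the lowest buckets first and the vertices we place there satisfy the Main Invariant by virtue of having been legitimately placeable (their $B$-values are consistent with the buckets they land in). Assembling these pieces yields the stated $O(D\alpha \log^2 n)$ bound.
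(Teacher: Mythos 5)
Your proposal is correct and follows essentially the same argument as the paper: bound the number of vertices entering $A_v$ by $sb = O(\alpha\log n)$, charge each one $O(D\log n)$ via Lemma~\ref{lem:rem} plus $O(D)$ via Lemma~\ref{lem:add}, and multiply. The paper's proof merely spells out the exact repopulation order (first $R_v$, then $P_v(i)$ from $i=b$ down to $i=1$, stopping as soon as moving a vertex would not decrease its $B$-value) to guarantee consistency with the Main Invariant, which you address only informally, but your cost accounting is identical.
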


\begin{proof} To satisfy the Main Invariant, we need to populate $A_v$. We fill $A_v$ in order from bucket 1 to bucket $b$. First, we add the vertices in $R_v$ until either $A_v$ is full or $R_v$ becomes empty. If $R_v$ becomes empty, then we start adding the vertices of $P_v(i)$ in order from $i=b$ to $i=1$; however, we only add vertex $u$ to $A_v$ if this causes $B(u)$ to decrease. Once we reach a vertex $u$ in $P_v$ where moving $u$ to the lowest numbered non-full bucket of $A_v$ does not cause $B(u)$ to decrease, then we stop populating $A_v$. Each time we add a vertex $u$ to $A_v$ from $P_v$, we first remove $u$ from $A_{a(u)}$ and apply Lemma~\ref{lem:rem}. Also, each time we add a vertex to $A_v$, we apply Lemma~\ref{lem:add}. We note that this method of populating $A_v$ is consistent with the Main Invariant.

We add at most $s b=O(\alpha \log n)$ vertices to $A_v$ and for each one we could apply Lemmas~\ref{lem:rem} and \ref{lem:add} in succession. Thus, the total time is $O(\alpha D \log^2 n)$.
\end{proof}

\subsection{Satisfying the rest of the invariants}

Recall the 4 phases of the algorithm from Section~\ref{sec:setup}. We update the data structure in phases 2 and 4.

\subsubsection{Phase 2 of the algorithm}

In phase 2 we need to update the data structure to reflect the changes to $\cal M$ that occur in phase 1. We note that updating $\cal M$ could cause immediate violations to only the Resolved Invariant, the Empty Active Set Invariant, and the Main Invariant. That is, in the process of satisfying these invariants, we will violate other invariants but the above three invariants are the only ones directly violated by changes to $\cal{M}$. We prove one lemma for each of these three invariants (the lemma for the Main Invariant was already proven as Lemma~\ref{lem:mtime}). %In the following two lemmas we reference Lemmas~\ref{lem:add} and \ref{lem:rem}.

As a technicality, in phase 2 of the algorithm we update the data structure ignoring the newly added edge and instead wait until the data structure update in phase 4 of the algorithm to insert and orient this edge. This means that during the data structure update in phase 2 of the algorithm, $\cal M$ may not be maximal with respect to the graph that we consider.

The time bounds in the following Lemmas (and Lemma~\ref{lem:mtime}) are stated with respect to a single change to $\cal{M}$. By Corollary 10, the amortized number of changes to $\cal M$ is $O(1)$ per update, so the time spent per change to $\cal M$ is the same as the amortized time per update.

\begin{lemma}\label{lem:restime} The time to update the data structure in response to a violation of the Resolved Invariant triggered by a single change to $\cal M$ is $O(D^2 \log n)$.\end{lemma}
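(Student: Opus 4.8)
The plan is to identify exactly which vertices can have their resolved/unresolved status changed by a single change to $\cal M$, and then bound the cost of repairing the $Z$-sets and the downstream invariants this entails. When a single vertex $z$ enters or leaves $\cal M$, the only vertices whose resolved status can flip are $z$ itself and the out-neighbors of $z$ (a vertex $w$ is resolved iff $w \in \cal M$ or $N^-(w) \cap \cal M \neq \emptyset$, so only changes to $\cal M$ at $w$ or at an in-neighbor of $w$ matter, and the in-neighbors of $w$ are the vertices for which $w$ is an out-neighbor). Thus there are at most $D+1$ vertices whose status changes. For each such vertex $v$ whose status changes, the Resolved Invariant demands that for every $u \in N^+(v)$, the membership of $v$ in $Z_u$ be corrected; this is $O(D)$ work to locate the $D$ out-neighbors, but the real cost is that moving $v$ into or out of $Z_u$ forces a corresponding move of $v$ into or out of one of $A_u$, $P_u$, $R_u$, which can cascade.

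The key step is the case analysis on the direction of the status change. If $v$ becomes resolved: by the Consistency Invariant $v$ must leave the active set of whatever vertex currently holds it (if any) and leave every residual set, moving into $Z_u$ for all $u \in N^+(v)$; removing $v$ from an active set $A_{a(v)}$ triggers exactly the replacement chain reaction analyzed in Lemma~\ref{lem:rem}, at cost $O(D\log n)$, and the remaining updates across $N^+(v)$ are $O(D)$ pointer moves, so $O(D\log n)$ per newly-resolved vertex. If $v$ becomes unresolved: $v$ must be removed from $Z_u$ for all $u \in N^+(v)$ and placed into $R_u$ for each such $u$ (it enters no active set immediately, which is consistent with the invariants — it is a candidate for later promotion when the Main Invariant is repaired, but that repair is charged separately via Lemma~\ref{lem:mtime}); this is $O(D)$ work. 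Summing over the at most $D+1$ affected vertices, each costing $O(D\log n)$, gives $O(D^2\log n)$ for a single change to $\cal M$.

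I would also need to check that the only invariants perturbed by this $Z$-repair are ones already accounted for: adding/removing a vertex from an active set can violate the Full Invariant and the Consistency Invariant, but the Full Invariant is restored inside the Lemma~\ref{lem:rem} replacement chain, and the Consistency Invariant is maintained by the $P_w$/$R_w$ updates built into Lemmas~\ref{lem:add} and~\ref{lem:rem}; the Orientation and Empty Active Set Invariants are untouched by these pointer moves, and any Main Invariant violation is handled in the dedicated step. The main obstacle I anticipate is making sure the bookkeeping is genuinely non-recursive across the $D+1$ affected vertices — that is, that repairing the status of one out-neighbor of $z$ does not itself change the status of a vertex outside the set $\{z\}\cup N^+(z)$, so that the $O(D^2\log n)$ bound does not hide a further cascade. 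This holds because the $Z$-repair only moves vertices between $Z_u$, $P_u$, $R_u$ and (via the replacement chain) between buckets of a single vertex's active set, none of which alters any vertex's membership in $\cal M$, hence none of which alters any further vertex's resolved status.
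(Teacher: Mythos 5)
Your identification of the affected vertices (at most $D+1$: the changed vertex and its out-neighbors), the resolved-direction handling (one Lemma~\ref{lem:rem} chain plus $O(D)$ pointer moves per newly resolved vertex, giving $O(D^2\log n)$ total), and the observation that no further cascade of status changes can occur all match the paper's proof. The gap is in the newly-unresolved direction. You place each newly unresolved vertex $u$ directly into $R_w$ for all $w\in N^+(u)$ and assert this is ``consistent with the invariants,'' deferring any later promotion to the Main Invariant repair of Lemma~\ref{lem:mtime}. This is not consistent: the Full Invariant requires that $R_x$ be empty whenever $A_x$ is not full, so if some out-neighbor $x\in N^+(u)\cap\mathcal{M}$ has a non-full active set, your placement leaves a standing violation. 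Moreover, the deferral has no mechanism behind it: Lemma~\ref{lem:mtime} is invoked only when a vertex is \emph{added} to $\mathcal{M}$ (to populate that vertex's empty active set), whereas here the triggering change is a removal from $\mathcal{M}$, so nothing ever promotes $u$ out of the residual sets. The Full Invariant is not cosmetic --- the construction of $S^+$ and $S^-$ terminates immediately when the removed vertex's active set is not full precisely because its residual set is then guaranteed empty, and both Lemma~\ref{lem:full} and the maximality argument (Case 3 in the proof of Lemma~\ref{lem:M}) rely on this; with your placement, a removal could terminate the chain reaction while unexamined residual vertices have no neighbor in $\mathcal{M}$, breaking maximality.

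The paper's proof closes exactly this hole: for each newly unresolved $u$ it finds the vertex $x\in N^+(u)\cap\mathcal{M}$ with the smallest $|A_x|$; if $A_x$ is not full, $u$ goes into the lowest non-full bucket of $A_x$ (via Lemma~\ref{lem:add}), and only if even that smallest active set is full is $u$ placed into $R_w$ for all $w\in N^+(u)$ --- in which case every relevant active set is full and the Full Invariant is safe, and the ``smallest $|A_x|$, lowest non-full bucket'' choice also preserves the Main Invariant. This costs only $O(D)$ per affected vertex, so it does not change your claimed bound; the missing ingredient is the placement rule itself, not the accounting.
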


\begin{proof} Suppose $v$ is removed from $\cal{M}$. Previously every vertex in $N^+(v)\cup\{v\}$ was resolved and now some of these vertices might be unresolved. To address the Resolved Invariant, for each newly unresolved vertex $u\in N^+(v)\cup\{v\}$, for all $w \in N^+(u)$, we need to remove $u$ from $Z_w$. To accomplish this, we do the following. For all $u \in N^+(v)$, if $|M^-(u)|=0$, then we know that $u$ has become unresolved. In this case, we scan $N^+(u)$ and for each $w \in N^+(u)$, we remove $u$ from $Z_w$. Now, for each such $w$, we need to add $u$ to either $A_w$, $P_w$, or $R_w$. To do this, we find the vertex $x \in N^+(u) \cap \cal M$ with the smallest $|A_x|$. If $A_x$ is not full, then we add $u$ to the lowest bucket of $A_x$ that's not full and apply Lemma~\ref{lem:add}. If $A_x$ is full then for all $w \in N^+(u)$ we add $u$ to $R_w$.

There are at most $D+1$ vertices in $N^+(v)\cup\{v\}$ and for each $u\in N^+(v)$ we spend $O(D)$ time scanning $N^+(u)$. Then, for each $u$, we may apply Lemma~\ref{lem:add}. Thus, the runtime is $O(D^2)$.

Suppose $v$ is added to $\cal{M}$. Some vertices in $N^+(v)\cup\{v\}$ may have previously been unresolved, but now they are all resolved. We scan $N^+(v)\cup\{v\}$ and for all $u \in N^+(v)$ and all $w \in N^+(u)$, we remove $u$ from whichever set it is in with respect to $w$ and add $u$ to $Z_w$. If we removed $u$ from $A_w$, then we update the data structure according to Lemma~\ref{lem:rem}.

There are $D+1$ vertices in $N^+(v)\cup\{v\}$ and each $u\in N^+(v)$ can only be removed from a single $A_w$ because the Consistency Invariant says that each vertex is in at most one active set. Thus, the runtime is $O(D^2 \log n)$.
\end{proof}

\begin{lemma}\label{lem:emtime}The time to update the data structure in response to a violation of the Empty Active Set Invariant triggered by a single change to $\cal M$ is $O(D \alpha \log n)$.\end{lemma}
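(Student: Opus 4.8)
The plan is to mirror the structure of the proof of Lemma~\ref{lem:restime}. First I would argue that a single change to $\cal{M}$ can directly violate the Empty Active Set Invariant only when a vertex $v$ is \emph{removed} from $\cal{M}$: adding $v$ to $\cal{M}$ leaves $A_v$ empty, and the invariant only constrains vertices outside $\cal{M}$, so no violation is created in that case. Thus it suffices to bound the work done when a single vertex $v$ is deleted from $\cal{M}$ and $A_v$ happens to be nonempty, in which case the repair is precisely to empty $A_v$.

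To empty $A_v$ we relocate each vertex $u\in A_v$ out of it. Since $v$ is being removed from $\cal{M}$, $A_v$ is destined to become empty altogether, so --- unlike in Lemma~\ref{lem:rem} --- there is no need to pull in ``replacement vertices'' to keep $A_v$'s buckets consistent; we simply decide, for each $u\in A_v$, which of $A_w$, $P_w$, $R_w$ (or $Z_w$, if $u$ has meanwhile become resolved) it should land in for each $w\in N^+(u)$. I would reuse exactly the recipe from the proof of Lemma~\ref{lem:restime}: scan $N^+(u)$, find the vertex $x\in N^+(u)\cap\cal{M}$ with smallest $|A_x|$, and if $A_x$ is not full insert $u$ into the lowest non-full bucket of $A_x$ and invoke Lemma~\ref{lem:add}; if instead $A_x$ is full (or $u$ has no out-neighbor in $\cal{M}$), put $u$ into $R_w$ for every $w\in N^+(u)$. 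The Empty Active Set Invariant is not re-violated by this, since we only ever add a vertex to the active set of a vertex that is itself in $\cal{M}$.

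For the running time: by definition $|A_v|\le sb=O(\alpha\log n)$. Relocating a single $u$ costs $O(D)$ to scan $N^+(u)$ together with $O(D)$ to either write $u$ into $R_w$ for all $w\in N^+(u)$ or to apply Lemma~\ref{lem:add}. Summing over the at most $O(\alpha\log n)$ vertices of $A_v$ yields the claimed $O(D\alpha\log n)$. The point that needs the most care --- and the same point that makes the analogous step of Lemma~\ref{lem:restime} go through --- is verifying that inserting $u$ into the smallest active set $A_x$ does not itself spawn a fresh Main Invariant violation (which would otherwise require a potentially expensive repair via Lemma~\ref{lem:mtime}): because $x$ minimizes $|A_x|$ over $N^+(u)\cap\cal{M}$, every $A_y$ with $y\in N^+(u)\cap\cal{M}$ has at least as many full low-indexed buckets as $A_x$ does (buckets fill lowest-first by the Full Invariant), so the Main Invariant for $u$ holds immediately after the insertion, and the only invariant that still needs fixing (the Consistency Invariant) is repaired inside Lemma~\ref{lem:add} in $O(D)$ time without any further cascade.
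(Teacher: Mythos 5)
Your proof is correct and takes essentially the same route as the paper: the invariant can only be violated when $v$ is removed from $\cal{M}$, you empty $A_v$ by relocating each of its at most $sb=O(\alpha\log n)$ vertices (resolved ones to $Z_w$, unresolved ones via the smallest-$|A_x|$ rule of Lemma~\ref{lem:restime} together with Lemma~\ref{lem:add}), and you charge $O(D)$ per relocated vertex to get $O(D\alpha\log n)$. Your closing remark that choosing the minimizer $x$ keeps the Main Invariant satisfied (via the Full Invariant's lowest-first filling) is a correct justification that the paper leaves implicit.
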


\begin{proof} The Empty Active Set Invariant can be violated only upon removal of a vertex $v$ from $\cal{M}$. To satisfy the invariant, we need to empty $A_v$. For each vertex $u$ in $A_v$, if $|M^-(u)|>0$, then $u$ is resolved and otherwise $u$ is unresolved. First, we address the case where $u$ is resolved. For each vertex $w \in N^+(u)$ we remove $u$ from whichever set it is in with respect to $w$ (which is not $A_w$ since $u$ is in $A_v$) and add $u$ to $Z_w$.

There are $s b=O(\alpha \log n)$ vertices $u$ in $A_v$ and for each $u$ we spend $O(D)$ time scanning  $N^+(u)$. Thus, the runtime for dealing with the resolved vertices in $A_v$ is $O(D \alpha \log n)$.

Now, we address the case where $u$ in $A_v$ is unresolved. For each $u$ in $A_v$ that is unresolved, and for each $w \in N^+(u)$, we need to add $u$ to either $A_w$, $P_w$, or $R_w$. To do this, we perform the same procedure as in Lemma~\ref{lem:restime}: We find the vertex $x$ in $N^+(u) \cap \cal M$ with the smallest $|A_x|$. If $A_x$ is not full, then we add $u$ to the lowest bucket of $A_x$ that's not full and apply Lemma~\ref{lem:add}. If $A_x$ is full then for all $w \in N^+(u)$ we add $u$ to $R_w$.

There are $s b=O(\alpha \log n)$ vertices $u$ in $A_v$ and for each such $u$ we spend $O(D)$ time scanning $N^+(u)$. Then, for each $u$, we may apply Lemma~\ref{lem:add}. Thus, the runtime for dealing with unresolved vertices in $A_v$ is $O(D \alpha \log n)$.
\end{proof}

%\begin{lemma}\label{lem:mtime} The time to update the data structure in response to a violation of the Main Invariant (Invariant~\ref{inv:main}) triggered by a single change to $\cal M$ is $O(D \alpha \log^2 n)$.\end{lemma}
%
%\begin{proof} The Main Invariant can be violated only upon the addition of a vertex $v$ to $\cal{M}$. To satisfy the invariant, we need to populate $A_v$. We fill $A_v$ in order from bucket 1 to bucket $b$. First, we add the vertices in $R_v$ until either $A_v$ is full or $R_v$ becomes empty. If $R_v$ becomes empty, then we start adding the vertices of $P_v(i)$ in order from $i=b$ to $i=1$; however, we only add vertex $u$ to $A_v$ if this causes $B(u)$ to decrease. Once we reach a vertex $u$ in $P_v$ where moving $u$ to the lowest numbered non-full bucket of $A_v$ does not cause $B(u)$ to decrease, then we stop populating $A_v$. Each time we add a vertex $u$ to $A_v$ from $P_v$, we first remove $u$ from $A_{a(u)}$ and apply Lemma~\ref{lem:rem}. Also, each time we add a vertex to $A_v$, we apply Lemma~\ref{lem:add}. We note that this method of populating $A_v$ is consistent with the Main Invariant.
%
%We add at most $s b=O(\alpha \log n)$ vertices to $A_v$ and for each one we could apply Lemmas~\ref{lem:rem} and \ref{lem:add} in succession. Thus, the total time is $O(\alpha D \log^2 n)$.
%\end{proof}

\subsubsection{Phase 4 of the algorithm}

In phase 4 of the algorithm, we need to update the data structure to reflect the edge reorientations that occur in phase 3. If the update was an insertion, we also need to update the data structure to reflect this new edge. We note that reorienting edges could cause immediate violations only to the Resolved Invariant and the Orientation Invariant. That is, in the process of satisfying these invariants, we will violate other invariants but the above two invariants are the only ones directly violated by %changes to $\cal{M}$.
edge reorientations and insertions.
Recall that $T$ is the amortized update time of the black box dynamic edge orientation algorithm.

\begin{lemma}\label{lem:edgetime}The total time to update the data structure in response to edge reorientations and insertions is $O(T D \log n)$.\end{lemma}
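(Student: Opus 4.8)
The plan is to bound the cost of handling a single edge reorientation or insertion by $O(D\log n)$ and the number of such events over the whole computation by $O(TU)$, where $U$ is the number of updates; multiplying these gives a total of $O(TUD\log n)$, i.e.\ amortized $O(TD\log n)$ per update, which is the claimed bound. The event count is immediate from the black box: with amortized update time $T$, the orientation algorithm does $O(TU)$ work over $U$ updates and hence performs at most $O(TU)$ reorientations, and there are at most $U$ insertions (and $T\geq 1$ absorbs them).

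The real work is to process one reorientation in $O(D\log n)$ time. Reorienting $\{x,y\}$ from $x\to y$ to $y\to x$ changes only the in-neighbourhoods of $x$ and $y$ (and, symmetrically, their out-neighbourhoods and the sets $M^-(x),M^-(y)$), so $x$ and $y$ are the only vertices whose placement in the data structure can be affected. I would handle the reorientation in three steps: (i) patch the adjacency lists and $M^-$ in $O(1)$ time; (ii) recompute the resolved/unresolved status of $x$ and of $y$, the only two statuses that can change; (iii) for each of $x$ and $y$, re-establish its correct membership in the $Z/A/P/R$ partitions of all of its out-neighbours according to its current status. Step (iii) is the substantive one. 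A vertex that is now resolved must lie in $Z_w$ for every out-neighbour $w$, so we remove it from its active set --- at most one, by the Consistency Invariant --- via Lemma~\ref{lem:rem} and add it to the relevant $Z$-sets. A vertex that is unresolved normally stays put in its current active set, and then we only adjust the passive set of a newly acquired out-neighbour; but if the vertex has been orphaned (its former active-set owner is no longer one of its out-neighbours), or if keeping its current bucket would break the Main Invariant with respect to a new out-neighbour in ${\cal M}$, we re-home it using the rule from the proof of Lemma~\ref{lem:restime}: choose $z\in N^+(\cdot)\cap{\cal M}$ minimizing $|A_z|$ and insert the vertex into the lowest non-full bucket of $A_z$ via Lemma~\ref{lem:add}, or, if no such bucket exists, place it in $R_w$ for every out-neighbour $w$. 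Each of $x$ and $y$ thus triggers only $O(1)$ invocations of Lemmas~\ref{lem:rem} and~\ref{lem:add} together with $O(D)$ scanning of out-neighbourhoods, so one reorientation costs $O(D\log n)$; an edge insertion, once the black box has oriented the new edge, involves a single new in-neighbour and is handled the same way, also in $O(D\log n)$.

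The main obstacle will be verifying that this three-step procedure is both exhaustive and correct: that the only invariants a reorientation or insertion can break \emph{directly} are the Resolved and Orientation Invariants, and that restoring them never costs more than $O(1)$ calls to Lemmas~\ref{lem:rem} and~\ref{lem:add} --- crucially, that here one only ever re-homes a single vertex, rather than repopulating an entire active set as in Lemma~\ref{lem:mtime}. The one genuinely new point to check is that the minimum-active-set re-homing rule preserves the Main Invariant: if $z$ minimizes $|A_z|$ over $N^+(\cdot)\cap{\cal M}$ and the re-homed vertex lands in the lowest non-full bucket $j$ of $A_z$, then every other out-neighbour of that vertex in ${\cal M}$ has active set of size at least $|A_z|$ and hence, by the Full Invariant, has bucket $j-1$ full --- exactly the condition the Main Invariant imposes on a vertex in bucket $j$. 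Once these points are settled, the rest is bookkeeping already encapsulated in Lemmas~\ref{lem:add} and~\ref{lem:rem}, and the $O(D\log n)$-per-event, hence $O(TD\log n)$-amortized, bound follows.
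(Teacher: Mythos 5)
Your proposal matches the paper's argument in all essentials: you charge each reorientation/insertion $O(D\log n)$ for the bookkeeping via Lemmas~\ref{lem:rem} and~\ref{lem:add} (removing the flipped endpoint from the old in-neighbourhood partition, re-homing it, and inserting the new in-neighbour into $Z_v$, a low bucket of $A_v$, $P_v$, or $R_v$, with the same Main-Invariant-driven demotion the paper performs), and you amortize against the black box's $O(T)$ reorientations per update. The only differences are organizational (you case on the resolved status of the two endpoints, the paper cases on set membership; you make the $O(TU)$ flip count and the Main Invariant check for the min-$|A_x|$ re-homing explicit), so this is essentially the paper's proof.
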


\begin{proof} Edge reorientations and insertions can cause violations only to the Resolved Invariant and the Orientation Invariant. Consider an edge $(u,v)$ that is flipped or inserted so that it is now oriented towards $v$. If the edge was flipped, then previously $v$ was in either $Z_u$, $A_u$, $P_u$, or $R_u$, and now we remove $v$ from this set. If $v$ was in $A_u$, then we process the removal of $v$ from $A_u$ according to Lemma~\ref{lem:rem} and add $v$ to the active set of another vertex if possible. To do this, we perform the same procedure as in Lemma~\ref{lem:restime}:
we find the vertex $x \in N^+(u) \cap \cal M$ with the smallest $|A_x|$. If $A_x$ is not full, then we add $u$ to the lowest bucket of $A_x$ that's not full and apply Lemma~\ref{lem:add}. If $A_x$ is full then for all $w \in N^+(u)$ we add $u$ to $R_w$.

In the above procedure we apply Lemma~\ref{lem:rem}, spend $O(D)$ time scanning $N^+(v)$, and apply Lemma~\ref{lem:add}. This takes time $O(D \log n)$.

Before the edge $(u,v)$ was flipped or inserted, $u$ was not in $Z_v$, $A_v$, $P_v$, or $R_v$ and now we need to add $u$ to one of these sets.
\begin{itemize}
\item If $u$ is resolved, we add $u$ to $Z_v$.
\item If $u$ is in an active set and the number of the lowest non-full bucket of $A_v$ is less than $B(u)$, then we remove $u$ from $A_{a(u)}$, applying Lemma~\ref{lem:rem}, and add $u$ to the lowest non-full bucket of $A_v$, applying Lemma~\ref{lem:add}.
\item If $A_v$ is not full and $u$ is not in any active set then we add $u$ to the lowest non-full bucket of $A_v$, applying Lemma~\ref{lem:add}.
\item Otherwise, if $u$ is in the active set of some vertex, we add $u$ to $P_v$ in the appropriate bucket.
\item Otherwise, $u$ is not in any active set and we add $u$ to $R_v$.
\end{itemize}

The slowest of the above options is to apply Lemmas~\ref{lem:rem} and \ref{lem:add} in succession which takes time $O(D \log n)$.

We have shown that updating the data structure takes time $O(D \log n)$ per edge flip, which is amortized time $O(D T \log n)$ per update.
\end{proof}

Combining the runtime analyses of updating $\cal{M}$, running the edge orientation algorithm, and updating the data structure, the total runtime of the algorithm is $O(\alpha D \log^2 n + D^2 \log n + T D \log n)$.
The edge orientation algorithm of Brodal and Fagerberg~\cite{BrodalF99} achieves $D=O(\alpha)$ and $T=O(\alpha+\log n)$. Thus, the runtime of our algorithm is $O(\alpha^2 \log^2 n)$.

\bibliography{icalp18}

\begin{thebibliography}{10}

\bibitem{AlonBI86}
Noga Alon, L{\'{a}}szl{\'{o}} Babai, and Alon Itai.
\newblock A fast and simple randomized parallel algorithm for the maximal
  independent set problem.
\newblock {\em J. Algorithms}, 7(4):567--583, 1986.

\bibitem{AOSS2018}
Sepehr Assadi, Krzysztof Onak, Baruch Schieber, and Shay Solomon.
\newblock Fully dynamic maximal independent set with sublinear update time.
\newblock In {\em Proc. 50th Annual {ACM} {SIGACT} Symposium on Theory of
  Computing, {STOC}}, 2018.

\bibitem{BerglinB17}
Edvin Berglin and Gerth~St{\o}lting Brodal.
\newblock A simple greedy algorithm for dynamic graph orientation.
\newblock In {\em Proc. 28th International Symposium on Algorithms and
  Computation, {ISAAC} 2017, December 9-12, 2017, Phuket, Thailand}, pages
  12:1--12:12, 2017.

\bibitem{BlellochFS12}
Guy~E. Blelloch, Jeremy~T. Fineman, and Julian Shun.
\newblock Greedy sequential maximal independent set and matching are parallel
  on average.
\newblock In {\em Proc. 24th {ACM} Symposium on Parallelism in Algorithms and
  Architectures, {SPAA} 2012, Pittsburgh, PA, USA, June 25-27, 2012}, pages
  308--317, 2012.

\bibitem{BrodalF99}
Gerth~Stølting Brodal and Rolf Fagerberg.
\newblock Dynamic representations of sparse graphs.
\newblock In {\em Proc. 6th International Workshop on Algorithms and Data
  Structures {WADS}}, pages 342--351. Springer-Verlag, 1999.

\bibitem{CHK16}
Keren Censor{-}Hillel, Elad Haramaty, and Zohar~S. Karnin.
\newblock Optimal dynamic distributed {MIS}.
\newblock In {\em Proc. {ACM} Symposium on Principles of Distributed Computing,
  {PODC} 2016, Chicago, IL, USA, July 25-28, 2016}, pages 217--226, 2016.

\bibitem{DaumGKN12}
Sebastian Daum, Seth Gilbert, Fabian Kuhn, and Calvin~C. Newport.
\newblock Leader election in shared spectrum radio networks.
\newblock In {\em Proc. {ACM} Symposium on Principles of Distributed Computing,
  {PODC} 2012, Funchal, Madeira, Portugal, July 16-18, 2012}, pages 215--224,
  2012.

\bibitem{FischerN18}
Manuela Fischer and Andreas Noever.
\newblock Tight analysis of parallel randomized greedy {MIS}.
\newblock In {\em Proc. 29th Annual {ACM-SIAM} Symposium on Discrete
  Algorithms, {SODA} 2018, New Orleans, LA, USA, January 7-10, 2018}, pages
  2152--2160, 2018.

\bibitem{GoelG06}
Gaurav Goel and Jens Gustedt.
\newblock Bounded arboricity to determine the local structure of sparse graphs.
\newblock In {\em Proc. Graph-Theoretic Concepts in Computer Science, 32nd
  International Workshop, {WG}}, pages 159--167, 2006.

\bibitem{HeTZ14}
Meng He, Ganggui Tang, and Norbert Zeh.
\newblock Orienting dynamic graphs, with applications to maximal matchings and
  adjacency queries.
\newblock In {\em Proc. Algorithms and Computation - 25th International
  Symposium, {ISAAC} 2014, Jeonju, Korea, December 15-17, 2014}, pages
  128--140, 2014.

\bibitem{HopcroftKarp}
John~E. Hopcroft and Richard~M. Karp.
\newblock An n\({}^{\mbox{5/2}}\) algorithm for maximum matchings in bipartite
  graphs.
\newblock {\em {SIAM} J. Comput.}, 2(4):225--231, 1973.

\bibitem{JurdzinskiK12}
Tomasz Jurdzinski and Dariusz~R. Kowalski.
\newblock Distributed backbone structure for algorithms in the {SINR} model of
  wireless networks.
\newblock In {\em Proc. Distributed Computing - 26th International Symposium,
  {DISC} 2012, Salvador, Brazil, October 16-18, 2012}, pages 106--120, 2012.

\bibitem{KopelowitzKPS14}
Tsvi Kopelowitz, Robert Krauthgamer, Ely Porat, and Shay Solomon.
\newblock Orienting fully dynamic graphs with worst-case time bounds.
\newblock In {\em Proc. Automata, Languages, and Programming - 41st
  International Colloquium, {ICALP} 2014, Copenhagen, Denmark, July 8-11, 2014,
  Part {II}}, pages 532--543, 2014.

\bibitem{Kowalik07}
Lukasz Kowalik.
\newblock Adjacency queries in dynamic sparse graphs.
\newblock {\em Inf. Process. Lett.}, 102(5):191--195, 2007.

\bibitem{KowalikK03}
Lukasz Kowalik and Maciej Kurowski.
\newblock Short path queries in planar graphs in constant time.
\newblock In {\em Proc. 35th Annual {ACM} Symposium on Theory of Computing,
  {STOC}}, pages 143--148, 2003.

\bibitem{KuhnMW04}
Fabian Kuhn, Thomas Moscibroda, and Roger Wattenhofer.
\newblock Initializing newly deployed ad hoc and sensor networks.
\newblock In {\em Proc. 10th Annual International Conference on Mobile
  Computing and Networking, {MOBICOM} 2004, Philadelphia, PA, USA, September 26
  - October 1, 2004}, pages 260--274, 2004.

\bibitem{Linial87}
Nathan Linial.
\newblock Distributive graph algorithms-global solutions from local data.
\newblock In {\em Proc. 28th Annual Symposium on Foundations of Computer
  Science, {FOCS} 1987, Los Angeles, California, USA, 27-29 October 1987},
  pages 331--335, 1987.

\bibitem{Luby86}
Michael Luby.
\newblock A simple parallel algorithm for the maximal independent set problem.
\newblock {\em {SIAM} J. Comput.}, 15(4):1036--1053, 1986.

\bibitem{NashW61}
Crispin~St.J. Nash-Williams.
\newblock Edge-disjoint spanning trees of finite graphs.
\newblock {\em J. London Math. Soc.}, 36(1):445--–450, 1961.

\bibitem{NashW64}
Crispin~St.J. Nash-Williams.
\newblock Decomposition of finite graphs into forests.
\newblock {\em J. London Math. Soc.}, 39(1):12, 1964.

\bibitem{NS13}
Ofer Neiman and Shay Solomon.
\newblock Simple deterministic algorithms for fully dynamic maximal matching.
\newblock In {\em Proc. Symposium on Theory of Computing Conference, STOC 2013,
  Palo Alto, CA, USA, June 1-4, 2013}, pages 745--754, 2013.

\bibitem{NguyenO08}
Huy~N. Nguyen and Krzysztof Onak.
\newblock Constant-time approximation algorithms via local improvements.
\newblock In {\em Proc. 49th Annual {IEEE} Symposium on Foundations of Computer
  Science, {FOCS} 2008, October 25-28, 2008, Philadelphia, PA, {USA}}, pages
  327--336, 2008.

\bibitem{Tutte61}
William~T. Tutte.
\newblock On the problem of decomposing a graph into $n$ connected factors.
\newblock {\em J. London Math. Soc.}, 36(1):221--230, 1961.

\bibitem{YuWHL14}
Dongxiao Yu, Yuexuan Wang, Qiang{-}Sheng Hua, and Francis C.~M. Lau.
\newblock Distributed ({\(\Delta\)}+1)-coloring in the physical model.
\newblock {\em Theor. Comput. Sci.}, 553:37--56, 2014.

\end{thebibliography}

%% .. or use the thebibliography environment explicitely

%\appendix
%\input{appendix.tex}

\end{document}